\newtheorem{theorem}{Theorem}[section]
\newtheorem{proposition}[theorem]{Proposition}
\newtheorem{remark}[theorem]{Remark}
\numberwithin{equation}{section}
\newcommand{\tr}{\mathrm{tr}}
\newcommand{\supp}{\mathrm{supp}}
\newcommand{\gslash}{\slashed{g}}
\newcommand{\Gammaslash}{\slashed{\Gamma}}
\begin{document}
\title{Future stability of expanding spatially homogeneous FLRW solutions of the spherically symmetric Einstein--massless Vlasov system with spatial topology $\mathbb{R}^3$}
\author{Martin Taylor}
\affil{\small Imperial College London,
Department of Mathematics,
South~Kensington~Campus,~London~SW7~2AZ,~United~Kingdom\vskip.2pc \ martin.taylor@imperial.ac.uk}
\date{June 30, 2023}

\maketitle

\begin{abstract}
	Spatially homogeneous FLRW solutions constitute an infinite dimensional family of explicit solutions of the Einstein--massless Vlasov system with vanishing cosmological constant.  Each member expands towards the future at a decelerated rate.  These solutions are shown to be nonlinearly future stable to compactly supported spherically symmetric perturbations, in the case that the spatial topology is that of $\mathbb{R}^3$.  The decay rates of the energy momentum tensor components, with respect to an appropriately normalised double null frame, are compared to those around Minkowski space.  When measured with respect to their respective $t$ coordinates, certain components decay faster around Minkowski space, while others decay faster around FLRW.
\end{abstract}

\tableofcontents

\section{Introduction}

Standard homogeneous isotropic cosmological models in general relativity are described by the Friedmann--Lema\^itre--Robertson--Walker (FLRW) spacetimes
\begin{equation} \label{eq:FLRWgeneral}
	\mathcal{M} = I \times \Sigma,
	\qquad
	g = -dt^2 + a(t)^2 g_{\Sigma},
\end{equation}
where $I\subset \mathbb{R}$ is an open interval, $(\Sigma,g_{\Sigma})$ is a constant curvature manifold, and $a\colon I \to (0,\infty)$ is an appropriate \emph{scale factor}.  See Section 5.\@3 of \cite{HaEl} for more on FLRW spacetimes.  This article concerns radiation filled FLRW cosmologies in which the constant curvature manifold is Euclidean space, $(\Sigma,g_{\Sigma}) = (\mathbb{R}^3,g_{\mathrm{Eucl}})$, with radiation described by spatially homogenous solutions of the massless Vlasov equation, and their stability properties.

\subsection{The Einstein--massless Vlasov system}

Consider a $3+1$ dimensional Lorentzian manifold $(\mathcal{M},g)$ and let
\[
	P = \{ (t,x,p) \in T\mathcal{M} \mid g_{(t,x)} (p, p) = 0 \} \subset T\mathcal{M},
\]
denote the \emph{mass shell} of $(\mathcal{M},g)$.
Consider some local coordinates $\{t,x^1,x^2,x^3\}$  on $\mathcal{M}$, and let $\{t,x^i, p^{\mu}\}$ denote the corresponding conjugate coordinate system for $T\mathcal{M}$, i.\@e.\@ $(t,x^i, p^{\mu})$ describes the point $p^{\mu} \partial_{x^{\mu}}\vert_{(t,x)} \in T \mathcal{M}$.  The massless Vlasov equation on $(\mathcal{M},g)$ takes the form
\begin{equation} \label{eq:EV1}
	p^0 \partial_t f + p^i \partial_{x^i} f - p^{\mu} p^{\nu} \Gamma_{\mu \nu}^i \partial_{p^i} f = 0,
\end{equation}
where $p^0$ is defined by the mass shell relation
\begin{equation} \label{eq:EV2}
	g_{\mu \nu} p^{\mu} p^{\nu} = 0.
\end{equation}
The Einstein--massless Vlasov system consists of equation \eqref{eq:EV1} coupled to the Einstein equations
\begin{equation} \label{eq:EV3}
	Ric(g)_{\mu\nu} - \frac{1}{2} R(g) g_{\mu\nu} = T_{\mu\nu},
\end{equation}
where the energy momentum tensor takes the form
\begin{equation}  \label{eq:EV4}
	T^{\mu \nu}(t,x)
	=
	\int_{P_{(t,x)}}
	f(t,x,p) p^{\mu} p^{\nu}
	\frac{\sqrt{- \det g }}{- p_0}
	dp^1 dp^2 dp^3,
\end{equation}
where indices are raised and lowered with respect to the metric $g$ (so that, for example, $p_0 = g_{0\mu} p^{\mu}$).  Here $t$ is also denoted $x^0$, Greek indices, such as $\mu, \nu$, range over $0,1,2,4$, and lower case Latin indices, such as $i,j,k$ range over $1,2,3$.

\subsection{The spatially homogeneous FLRW family of solutions}
\label{subsec:FLRWintro}

The \emph{spatially homogeneous FLRW} family constitutes an infinite dimensional family of explicit solutions of the Einstein--massless Vlasov system \eqref{eq:EV1}--\eqref{eq:EV4}.

Define the manifold
\begin{equation} \label{eq:manifold}
	\mathcal{M}_{\circ} = (0,\infty) \times \mathbb{R}^3,
\end{equation}
and consider the Euclidean metric
\[
	g_{\mathrm{Eucl}} = (dx^1)^2 + (dx^2)^2 + (dx^3)^2,
\]
on $\mathbb{R}^3$.  Given $x\in \mathbb{R}^3$ and $p = (p^1,p^2,p^3) \in T_x \mathbb{R}^3$, define
\[
	\vert p \vert^2
	=
	\vert p\vert_{g_{\mathrm{Eucl}}}^2
	=
	(g_{\mathrm{Eucl}})_{ij} p^i p^j
	=
	(p^1)^2 + (p^2)^2 + (p^3)^2.
\]
For any smooth, sufficiently decaying function $\mu : [0,\infty) \to [0,\infty)$ such that $\mu \not\equiv 0$, the metric $g_{\circ}$ and particle density $f_{\circ}$ defined by
\begin{equation} \label{eq:FLRW1}
	g_{\circ} = -dt^2 + a(t)^2 \big( (dx^1)^2 + (dx^2)^2 + (dx^3)^2 \big),
	\qquad
	f_{\circ}(t,x,p)
	=
	\mu (a(t)^4 \vert p \vert^2),
\end{equation}
where
\begin{equation} \label{eq:FLRW2}
	a(t) = t^{\frac{1}{2}} \left( \frac{4 \varrho}{3} \right)^{\frac{1}{4}},
	\qquad
	\varrho = \int \vert p \vert \mu(\vert p \vert^2) dp,
\end{equation}
define a solution of \eqref{eq:EV1}--\eqref{eq:EV4} on \eqref{eq:manifold}.  Such solutions are known as spatially homogeneous FLRW solutions.

\begin{remark}[Decelerated expansion] \label{rmk:decel}
Spacetimes with metric of the form \eqref{eq:FLRWgeneral} are said to undergo \emph{accelerated expansion} if the scale factor $a(t)$ satisfies $a''(t) > 0$ for all $t$, \emph{linear expansion} if $a''(t) = 0$ for all $t$, or \emph{decelerated expansion} if $a''(t)<0$ for all $t$.  Note that the solutions \eqref{eq:FLRW1}--\eqref{eq:FLRW2} undergo decelerated expansion.  Previous stability works in cosmological settings have typically considered perturbations of spacetimes undergoing linear or accelerated expansion.  See Section \ref{subsec:previous} below.
\end{remark}

\begin{remark}[Anisotropic spatially homogeneous FLRW solutions] \label{rmk:anisotropic}
	The spatially homogeneous FLRW family of solutions \eqref{eq:FLRW1}--\eqref{eq:FLRW2} lie inside a more general family of \emph{anisotropic} spatially homogeneous FLRW family.  Given any smooth, sufficiently decaying function $F : \mathbb{R}^3 \to [0,\infty)$ such that $F \not\equiv 0$ and
	\[
		\int_{\mathbb{R}^3} F(p) p^i dp = 0, \quad i=1,2,3,	
		\qquad
		\text{and}
		\qquad
		\int_{\mathbb{R}^3} F(p) \frac{p^i p^j}{\vert p \vert} dp = \frac{\varrho}{3} \delta^{ij}, \quad i,j=1,2,3,
	\]
	the metric and particle density
	\[
		g_{\circ} = -dt^2 + a(t)^2 \big( (dx^1)^2 + (dx^2)^2 + (dx^3)^2 \big),
		\qquad
		f_{\circ}(t,x,p)
		=
		F (a(t)^2 p^1, a(t)^2 p^2,a(t)^2 p^3),
	\]
	where
	\[
		a(t) = t^{\frac{1}{2}} \left( \frac{4 \varrho}{3} \right)^{\frac{1}{4}},
		\qquad
		\varrho = \int \vert p \vert F(p) dp,
	\]
	also defines a solution of \eqref{eq:EV1}--\eqref{eq:EV4} on \eqref{eq:manifold}.  Since these solutions are, in general, not spherically symmetric (see Section \ref{subsec:sphericalsymmetry}), consideration here is restricted to the solutions of the form \eqref{eq:FLRW1}--\eqref{eq:FLRW2}.
\end{remark}

\begin{remark}[$\mathbb{T}^3$ spatial topology]
	Each \eqref{eq:FLRW1}--\eqref{eq:FLRW2}, and more generally the solutions of Remark \ref{rmk:anisotropic}, also define a spatially homogeneous FLRW solution of the Einstein--massless Vlasov system on the manifold $(0,\infty)\times \mathbb{T}^3$.  This article will only be concerned with solutions on $(0,\infty)\times \mathbb{R}^3$.  (See also Remark \ref{rmk:T3rates} for a further comment on the solutions on $(0,\infty)\times \mathbb{T}^3$.)
\end{remark}

The components of the energy momentum tensor of \eqref{eq:FLRW1}--\eqref{eq:FLRW2}, with respect to the Cartesian $(t,x^1,x^2,x^3)$ coordinate system, take the form
\begin{equation} \label{eq:EMFLRW}
	(T^{\circ})^{tt}(t,x) = \frac{3}{4t^2},
	\qquad
	(T^{\circ})_i^j(t,x) = \frac{1}{4t^2} \delta_i^j,
	\qquad
	(T^{\circ})_{it}(t,x) = 0,
\end{equation}
for $i,j = 1,2,3$, where $(T^{\circ})_i^j(t,x) = g_{ik}(t,x) (T^{\circ})^{jk}(t,x)$.

The spacetimes $(\mathcal{M}_{\circ},g_{\circ})$ expand from a spacelike singularity at $t=0$ (with Kretchmann scalar $\vert Rm_{\circ} \vert^2_{g_{\circ}} = \frac{3}{2t^4}$) and are future geodesically complete.  See the Penrose diagram of Figure \ref{fig:FLRW}.  
Such solutions feature in the Ehlers--Geren--Sachs Theorem \cite{EGS}, which in particular ensures that all solutions of the Einstein--massless Vlasov system for which $f$ is \emph{isotropic} and \emph{irrotational} are either stationary or described by an FLRW metric as above.

\subsection{First version of the main theorem}

The main result of the present work concerns the future stability of the family \eqref{eq:FLRW1}--\eqref{eq:FLRW2}.

\begin{theorem}[Future stability of FLRW] \label{thm:main}
	Each spatially homogeneous FLRW solution is future nonlinearly stable, as a solution of the Einstein--massless Vlasov system \eqref{eq:EV1}--\eqref{eq:EV4}, to compactly supported spherically symmetric perturbations.
	
	More precisely, consider an FLRW solution of the form \eqref{eq:FLRW1}--\eqref{eq:FLRW2} for some smooth, non-identically vanishing $\mu$ (decaying suitably so that $\varrho$ is defined).  For all compactly supported spherically symmetric initial data sufficiently close to that of \eqref{eq:FLRW1}--\eqref{eq:FLRW2} on $\Sigma_1 = \{t=1\}$, the resulting solution is future geodesically complete, is isometric to \eqref{eq:FLRW1}--\eqref{eq:FLRW2} after some retarded time and, in an appropriately normalised double null gauge, the (appropriately normalised) components of the energy momentum tensor decay to zero as \eqref{eq:EMFLRW} to leading order, and each metric component and Christoffel symbol either remains close or decays to its FLRW value to the past of this retarded time, with quantitative polynomial rates.
\end{theorem}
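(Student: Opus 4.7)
The plan is to use spherical symmetry to reduce to a $1+1$-dimensional problem in a double null gauge $(u,v)$ in which the metric takes the form $g = -\Omega^2 du\, dv + r^2 d\sigma_{\mathbb{S}^2}$, where $r$ denotes the area radius of the symmetry spheres. For the FLRW background, one verifies that $u = t^{1/2} - \frac{c}{2}\rho$ and $v = t^{1/2} + \frac{c}{2}\rho$, with $\rho$ the standard Euclidean radial coordinate and $c = (4\varrho/3)^{1/4}$, are null, giving $\Omega_\circ^2 = (u+v)^2$ and $r_\circ = (v^2-u^2)/2$. Since the perturbation is compactly supported and the characteristics of the system are null geodesics, there exists an outgoing null cone $\{v = v_*\}$ outside of which, to the future, the solution coincides exactly with the reference FLRW. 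The main stability problem is thus semi-global on the truncated double null region bounded by $\Sigma_1$, $\{v = v_*\}$ (carrying exact FLRW data), and the regular axis $\{r = 0\}$.

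The argument then proceeds by a continuity/bootstrap argument. I would introduce renormalised quantities measuring deviation from FLRW --- the differences $\Omega^2 - \Omega_\circ^2$, $\partial_u r - \partial_u r_\circ$, $\partial_v r - \partial_v r_\circ$, the Ricci coefficients of the double null frame, the Hawking mass, and the frame components of the energy-momentum tensor normalised relative to \eqref{eq:EMFLRW} --- and postulate polynomial-in-$t$ decay for each. The Einstein equations in double null form --- Raychaudhuri equations along each null direction, Codazzi constraints, and the wave equation for $r$ --- are transport/hyperbolic equations that one integrates inward from $\{v = v_*\}$ and upward from $\Sigma_1$, using the bootstrap rates on the matter to improve each geometric bootstrap constant. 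The Vlasov distribution $f$ is propagated along the null geodesic flow of the perturbed metric; in spherical symmetry one parametrises $f$ in terms of $(u,v)$ together with quantities conserved by the background flow (the angular momentum $\ell^2$ and the rescaled frequency $a(t)^2 |p|$) and tracks the small corrections induced by the perturbed geometry. Integrating $f$ against appropriate momentum weights then produces bounds on the frame components $T_{uu}, T_{uv}, T_{vv}, T_{AB}$ at the conjectured rates, closing the loop.

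The principal obstacle is that the background undergoes \emph{decelerated} expansion with $a(t) \sim t^{1/2}$: unlike the accelerated or linear settings previously treated in cosmological stability, there is no Hubble-type friction actively damping perturbations, and the background matter itself decays only at the modest rate \eqref{eq:EMFLRW}. The delicate task is therefore to identify the precise polynomial hierarchy satisfied in the double null frame by the frame components of $T$ and by the geometric deviations from FLRW --- noting, as highlighted in the abstract, that some components decay faster and others slower than their Minkowski analogues when measured in their respective $t$-coordinates --- and to verify that the coupled Einstein--Vlasov evolution preserves this hierarchy, so that the bootstrap constants can be recovered strictly smaller than their assumed values provided the initial perturbation is small enough. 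Future geodesic completeness and the asserted isometry to FLRW past the retarded time $\{v = v_*\}$ then follow from the closed bootstrap together with standard continuation and matching arguments across $\{v = v_*\}$.
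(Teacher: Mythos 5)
Your high-level framework — spherically symmetric double null gauge $g = -\Omega^2 du\, dv + R^2\gamma$ with the correct FLRW normalisation $\Omega_\circ^2 = (u+v)^2$, renormalised differences from the background, a hierarchy of polynomial rates for the energy--momentum frame components, and a bootstrap that propagates these rates through the Einstein transport equations and the Vlasov equation along null geodesics — is indeed the framework of the paper's proof (Theorem \ref{thm:bootstrap}). The identification of the principal obstacle (decelerated expansion, hence no Hubble-type friction, so decay must come from dispersion and the correct hierarchy) is also right.

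However, there is a concrete missing idea at the centre of the argument, and it shows up in the way you set up the domain. You propose to solve a semi-global problem on the region bounded by $\Sigma_1$, an outer null cone carrying exact FLRW data, and the regular axis $\{r = 0\}$, integrating ``inward'' and ``upward,'' with the asserted late-time isometry to FLRW being a consequence of domain of dependence on the far side of the outer cone. This misses the mechanism that actually drives the proof, which is inherited from Dafermos's approach to Minkowski stability for massless Vlasov: the massless Vlasov matter \emph{disperses outward in retarded time}, so that after a bounded amount of retarded time the perturbation vacates a whole neighbourhood of the centre. Concretely, Proposition \ref{prop:geodesics} and its bootstrap analogue show that every null geodesic in $\supp(f)$ has $\dot u = p^u \gtrsim 0$ with $\int p^u\,ds < \infty$, hence $\supp(f-f_\circ) \subset \{U_0 \leq u \leq U_1\}$ for some finite $U_1$. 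The ``isometry to FLRW after some retarded time'' in the theorem statement refers to the region $\{u \geq U_1\}$, i.e.\ \emph{near the axis at late times} — not to the exterior of the outer null cone, which is trivially FLRW by domain of dependence. That near-axis isometry is \emph{not} a causality statement: it requires the further observation that a spherically symmetric solution with a regular centre and $f \equiv f_\circ$ is locally isometric to $(\mathcal{M}_\circ, g_\circ)$, a Birkhoff-type theorem proved in the paper by propagating exact FLRW data along $\{u = U_1\}$ via the Raychaudhuri equation together with a residual gauge adjustment.

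The reason this matters is that it changes where the analysis has to be done. The paper's bootstrap region $\mathcal{Q}_{U_1,T}$ is the strip $\{U_0 \leq u \leq U_1,\ v \geq v_0\}$, which is \emph{uniformly away from the centre} once $v \geq v_0$: there $t^{1/2} \sim r \sim v$, the double null coordinates are non-degenerate, $R \sim v^2$, and all the weights in the transport estimates \eqref{eq:regioni1}--\eqref{eq:regioni7} are harmless. Your proposal, by integrating ``inward'' to the regular axis and treating $\{r=0\}$ as part of the bootstrap boundary, would force you to close the estimates in a region where $R \to 0$, $u \to v$, and $\Omega^{-2}$, $R^{-1}$ blow up — precisely the regime that the paper circumvents entirely via the dispersion-plus-Birkhoff argument. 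Without identifying that the matter leaves the centre (in retarded time) you have no way to use the Birkhoff mechanism, and the bootstrap as you've described it would not close near the axis. The local extension principle the paper invokes (Theorem \ref{thm:noncentext}) is likewise a \emph{non-central} extension principle, used only where $0 < R_0 \leq R \leq R_1$; it is not expected to hold near the axis, which is a second sign that the proof must avoid that region.
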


A more precise version of Theorem \ref{thm:main} is stated below in Section \ref{section:maintheorem}.  See also the Penrose diagram in Figure \ref{fig:FLRW}.

The proof of Theorem \ref{thm:main} is based on an understanding of decay properties of solutions of the massless Vlasov equation.  As such, in addition to Theorem \ref{thm:main}, decay properties of solutions of the massless Vlasov equation on a fixed FLRW background of the form \eqref{eq:FLRW1}--\eqref{eq:FLRW2} are given.  See Theorem \ref{thm:fixedbackground}.

The assumption of compact support can be relaxed, and is made for convenience in order to localise the proof.  Indeed, following \cite{Da}, the perturbation of $f$, in the coupled problem, vanishes close to the centre of spherical symmetry at late times.  The absence of singularities in future evolution therefore follows from the aforementioned quantitative decay properties, along with comparatively soft arguments including an \emph{extension principle around non-central points} (see \cite{DaRe07} or Theorem \ref{thm:noncentext} below).

\begin{remark}[Comparison with decay rates on Minkowski space]
The decay rates of solutions of the massless Vlasov equation on FLRW, obtained in Theorem \ref{thm:fixedbackground}, are later compared with those of the massless Vlasov equation on Minkowski space. When measured with respect to their respective $t$ coordinates, certain components of the energy momentum tensor, with respect to an appropriately normalised double null frame, decay faster in Minkowski space, while others decay faster in FLRW. See Remark \ref{rmk:compareMinkowski}.
\end{remark}

\begin{remark}[Birkhoff-type theorem]
	The proof of Theorem \ref{thm:main} in particular contains a Birkhoff-type theorem for the system \eqref{eq:EV1}--\eqref{eq:EV4} which ensures that any spherically symmetric solution with a regular centre, for which $f$ is equal to its FLRW value $f_{\circ}$ (defined by \eqref{eq:FLRW1}), is locally isometric to the FLRW spacetime $(\mathcal{M}_{\circ},g_{\circ})$.  See the final step in the proof of Theorem \ref{thm:bootstrap} (from which a general theorem can be extracted).  Recall also the Ehlers--Geren--Sachs Theorem \cite{EGS}.
\end{remark}

\begin{remark}[Einstein--Euler with radiation equation of state]
	The metric $g_{\circ}$, defined by the former of \eqref{eq:FLRW1} with $a(t)=t^{\frac{1}{2}}$, also arises as a solution of the Einstein--Euler system, i.\@e.\@ equation \eqref{eq:EV3} with
	\[
		T^{\mu \nu} = (\rho + p) u^{\mu} u^{\nu} + p g^{\mu \nu},
	\]
	for unknowns $g$, four velocity $u^{\mu}$ --- satisfying $g_{\mu \nu} u^{\mu} u^{\nu} = -1$ --- pressure $p$, and density $\rho$, with $p$ and $\rho$ related by the \emph{radiation equation of state}
	\[
		p = \frac{\rho}{3}.
	\]
	The fluid variables for the FLRW metric take the form, in Cartesian coordinates,
	\begin{equation} \label{eq:FLRWfluid}
		u = \partial_t,
		\qquad
		\rho = \frac{3}{4 t^2}.
	\end{equation}
	
	It is known that the solution \eqref{eq:FLRWfluid} is unstable to formation of shock waves for the Euler equations $\nabla_{\mu} T^{\mu \nu} = 0$ on the fixed FLRW background $(\mathcal{M}_{\circ},g_{\circ})$ (see Chapter 9 of \cite{Sp13}).
	
	Note also the linear \emph{Jeans instability} for the Einstein--Euler system linearised around FLRW \cite{Lif} (see also the discussion in \cite{Bo}).
\end{remark}

\subsection{Previous works}
\label{subsec:previous}

There have been previous works on the Einstein--Vlasov system in cosmological settings for Einstein equations with a positive cosmological constant \cite{Ri}, and also for perturbations of the vacuum Milne solution \cite{AnFa,BaFa,Fa}.  Note also related works \cite{HaSp,RoSp,Sp,FaOfWy} on the Einstein--Euler system.  See also \cite{AnMo,FaWy,Four}.
Each of these works considers perturbations of spacetimes with metric of the form \eqref{eq:FLRWgeneral} undergoing \emph{accelerated expansion} or \emph{linear expansion} (see Remark \ref{rmk:decel}).  Contrast with the solutions \eqref{eq:FLRW1}--\eqref{eq:FLRW2} of the present work in which undergo slow \emph{decelerated expansion}.

In the asymptotically flat setting, Minkowski space has been shown to be nonlinearly stable for both the massless and massive Einstein--Vlasov systems, first in spherical symmetry \cite{Da,ReRe}, and later to general perturbations \cite{Tay17,LiTa,FJS}.  The present work is based on the approach of \cite{Da}.  \emph{Anti-de Sitter} space has also been shown to be nonlinearly \emph{unstable} as a solution of the Einstein--massless Vlasov system with a negative cosmological constant \cite{Mosproof}.

Though there is no direct non-relativistic analogue of the present problem, there is an infinite dimensional family of explicit, spatially homogenous, \emph{stationary} solutions of the Vlasov--Poisson system.  The works \cite{HKNR,BeMaMo22,IPWW} on Landau damping, and also the celebrated works \cite{MoVi,BeMaMo,GrNgRo} on the torus $\mathbb{T}^3$, concern the stability of these families of solutions.  Note however that one must apply the \emph{Jeans swindle} to make sense of the problems considered in these works in the gravitational setting.  See also \cite{Kie}.

\subsection{Outline of the paper}
Section \ref{section:EVss} contains preliminaries on the Einstein--massless Vlasov system in double null gauge.  Section \ref{section:FLRW} concerns further properties of the FLRW spacetimes, introduced in Section \ref{subsec:FLRWintro}.  A double null gauge in FLRW is defined in Section \ref{subsec:FLRWdoublenull} and, as a precursor to the proof Theorem \ref{thm:main}, properties of the massless Vlasov equation \eqref{eq:EV1}--\eqref{eq:EV2} on the FLRW background spacetimes are presented in Section \ref{subsec:mvfixedbackground}.  In Section \ref{section:maintheorem} a more precise version of Theorem \ref{thm:main} is formulated, and in Section \ref{section:proof} its proof is given.

\subsection*{Acknowledgements}
I acknowledge support through Royal Society Tata University Research Fellowship URF\textbackslash R1\textbackslash 191409.  
I am grateful to G.\@ Fournodavlos, H.\@ Masaood, and J.\@ Speck for helpful discussions.

\section{The spherically symmetric Einstein--massless Vlasov system}
\label{section:EVss}

This section concerns facts about the Einstein--massless Vlasov system in spherical symmetry.  In Section \ref{subsec:sphericalsymmetry} the spherical symmetry assumption is introduced.  In Section \ref{subsec:doublenull} spherically symmetric double null gauges are introduced, and the residual spherically symmetric double null freedom is discussed.  In Section \ref{subsec:ssEmVdng} the Einstein--massless Vlasov system in a spherically symmetric double null gauge is presented.  In Section \ref{subsec:tandr} functions $t$ and $r$ are introduced.  Finally, in Section \ref{subsec:Cauchyproblem} the Cauchy problem is discussed and an extension principle, which will be used in the proof of Theorem \ref{thm:main}, is stated.

\subsection{Spherical symmetry}
\label{subsec:sphericalsymmetry}

A solution $(\mathcal{M},g,f)$ of \eqref{eq:EV1}--\eqref{eq:EV4} is \emph{spherically symmetric} if $SO(3)$ acts by isometry on $(\mathcal{M},g)$ and preserves $f$.  More precisely, it is assumed that there is a smooth isometric action $\mathcal{O} \colon \mathcal{M} \times SO(3) \to \mathcal{M}$ on $(\mathcal{M},g)$ such that, for each $p\in \mathcal{M}$, the orbit of the action satisfies either $Orb(p) \simeq S^2$, or $Orb(p) = \{p\}$ and, moreover, for each of the generators $\Omega_1, \Omega_2, \Omega_3$ of the $SO(3)$ action, the corresponding flows $\Phi^i\colon (0,2\pi) \times \mathcal{M} \to \mathcal{M}$ satisfy
\begin{equation} \label{eq:ssf}
	f(x,p) = f(\Phi^i_s(x),(\Phi^i_s)_* p), \qquad \text{for all } s\in (0,2\pi), \quad i = 1,2,3,
\end{equation}
for all $(x,p) \in \mathcal{P}$.  It is assumed that the centre
\[
	\Gamma = \{ x \in \mathcal{M} \mid \mathcal{O}(x,\mathfrak{s}) = x \text{ for all } \mathfrak{s} \in SO(3)\},
\]
consists of a single timelike geodesic, and that $\mathcal{M}\smallsetminus \Gamma$ splits diffeomorphically as
\[
	\mathcal{M}\smallsetminus \Gamma \simeq \mathcal{Q} \times S^2,
\]
for a smooth 2-manifold $\mathcal{Q}= (\mathcal{M} \smallsetminus \Gamma) /SO(3)$.
The quotient space $\mathcal{M}/SO(3)$, also denoted $\mathcal{Q}$, is viewed a manifold with boundary, with boundary identified with the centre $\Gamma$.

\subsection{Double null gauge} \label{subsec:doublenull}
A double null gauge consists of functions $u,v\colon \mathcal{Q} \to \mathbb{R}$ such that the level hypersurfaces of $u$ foliate $\mathcal{Q}$ by outgoing lines which are null with respect to the induced metric on $\mathcal{Q}$, and the level hypersurfaces of $v$ foliate $\mathcal{Q}$ by ingoing null lines.  The level hypersurfaces of $u$ and $v$ lift to outgoing and incoming null cones of $\mathcal{M}$ and such a double null gauge can be complemented with local coordinates $(\theta^1,\theta^2)$ on $S^2$ to local coordinates $(u,v,\theta^1,\theta^2)$ for $\mathcal{M}$.

For a given double null gauge, the metric $g$ can be written in double null form
\begin{equation} \label{eq:gdoublenull}
	g = - \Omega^2 du dv + R^2 \gamma,
\end{equation}
where $\gamma$ is the unit round metric on $S^2$, where $\Omega$ is a function on $\mathcal{Q}$ and $R \colon \mathcal{Q} \to \mathbb{R}$ is the area radius function
\[
	R(u,v) = \sqrt{\mathrm{Area}(S_{u,v})/4\pi},
\]
where $S_{u,v} = \{ (u,v) \} \times S^2 \subset \mathcal{M}$.
The area radius $R$ extends regularly to $0$ on the centre $\Gamma$.  Define
\[
	\lambda = \partial_v R,
	\qquad
	\nu = \partial_u R.
\]

\begin{remark}[Residual gauge freedom] \label{rmk:residualgauge}
	There is residual double null gauge freedom present in \eqref{eq:gdoublenull}.  If $\widetilde{u}, \widetilde{v} \colon I \to \mathbb{R}$ are increasing functions, with $I \subset \mathbb{R}$ a suitable interval, then the change $u \mapsto \widetilde{u}(u)$, $v \mapsto \widetilde{v}(v)$ preserves the double null form \eqref{eq:gdoublenull}.  Under such a change, the metric takes the form
	\[
		g
		=
		- \widetilde{\Omega}^2(\widetilde{u},\widetilde{v}) d \widetilde{u} d \widetilde{v} + R(\widetilde{u},\widetilde{v})^2 \gamma
		=
		- \widetilde{\Omega}^2(\widetilde{u}(u),\widetilde{v}(v)) \frac{d\widetilde{u}}{du} (u) \frac{d\widetilde{v}}{dv}(v) d u d v + R(\widetilde{u}(u),\widetilde{v}(v))^2 \gamma.
	\]
	It thus follows that
	\begin{align}
		\Omega^2(u,v) 
		&
		=
		\widetilde{\Omega}^2(\widetilde{u}(u),\widetilde{v}(v)) \widetilde{u}' (u) \widetilde{v}' (v),
		\label{eq:gauge1}
		\\
		R(u,v) 
		&
		=
		\widetilde{R}(\widetilde{u}(u),\widetilde{v}(v)),
		\label{eq:gauge2}
		\\
		\nu(u,v) 
		&
		=
		\widetilde{\nu}(\widetilde{u}(u),\widetilde{v}(v)) \widetilde{u}' (u),
		\label{eq:gauge3}
		\\
		\lambda(u,v) 
		&
		=
		\widetilde{\lambda}(\widetilde{u}(u),\widetilde{v}(v)) \widetilde{v}' (v).
		\label{eq:gauge4}
	\end{align}
\end{remark}

The inverse metric of \eqref{eq:gdoublenull} takes the form
\[
	g^{-1} = -2 \Omega^{-2} (\partial_u \otimes \partial_v + \partial_v \otimes \partial_u) + R^{-2} \gamma^{AB} \partial_{\theta^A} \otimes \partial_{\theta^B},
\]
where $A$ and $B$ range over $1$ and $2$.  The nonvanishing Christoffel symbols of $g$ are
\[
	\Gamma^u_{AB} = \frac{2 \lambda R}{ \Omega^2} \gamma_{AB},
	\quad
	\Gamma^v_{AB} = \frac{2 \nu R}{\Omega^2} \gamma_{AB},
	\quad
	\Gamma^A_{Bv} = \frac{\lambda}{R} \delta^A_B,
	\quad
	\Gamma^A_{Bu} = \frac{\nu}{R} \delta^A_B,
	\quad
	\Gamma^u_{uu} = \partial_u \log \Omega^2,
	\quad
	\Gamma^v_{vv} = \partial_u \log \Omega^2,
\]
and
\[
	\Gamma^A_{BC} = \Gammaslash^A_{BC},
\]
where $\Gammaslash^A_{BC}$ are the Christoffel symbols of $(S^2,\gamma)$.

For a given spherically symmetric double null gauge $(u,v,\theta^1,\theta^2)$ for $(\mathcal{M},g)$, one defines a corresponding conjugate coordinate system $(u,v,\theta^1,\theta^2,p^u,p^v,p^1,p^2)$ for $T\mathcal{M}$, whereby $(u,v,\theta^1,\theta^2,p^u,p^v,p^1,p^2)$ defines the point
\[
	\big(p^{u} \partial_u + p^v \partial_v + p^{A} \partial_{\theta^A} \big) \big\vert_{(u,v,\theta^1,\theta^2)} \in T\mathcal{M}.
\]
Such a coordinate system induces a coordinate system $(u,v,\theta^1,\theta^2,p^v,p^1,p^2)$ on the mass shell $P$, with $p^u$ defined by the mass shell relation \eqref{eq:EV2}, which takes the form
\begin{equation} \label{eq:defofpu}
	p^u
	=
	\frac{R^2 \gamma_{AB} p^A p^B}{\Omega^2 p^v}
	=
	\frac{L^2}{\Omega^2 R^2 p^v},
\end{equation}
where the angular momentum $L \colon P \to [0,\infty)$ is defined by
\begin{equation} \label{eq:am}
	L(x,p)^2 = R^4 \gamma_{AB} p^A p^B.
\end{equation}

The volume form on
\[
	P_x = \{ p \in T_x\mathcal{M} \mid g_{x} (p, p) = 0 \} \subset T_x\mathcal{M},
\]
in this induced coordinate system takes the form
\[
	\frac{R^2 \sqrt{\det \gamma}}{p^v} dp^v \wedge dp^1 \wedge dp^2.
\]

The condition \eqref{eq:ssf} that $f$ is spherically symmetric means that, in a given double null gauge, $f$ can be written as a function --- which, abusing notation slightly, is also denoted $f\colon \mathcal{Q} \times [0,\infty) \times [0,\infty) \to [0,\infty)$ --- of $u$, $v$, $p^v$, and $L = (R^4 \gamma_{AB} p^A p^B)^{\frac{1}{2}}$,
\begin{equation} \label{eq:reducedf}
	f(x,p) = f(u,v,p^v,L).
\end{equation}

The energy momentum tensor $T$ on $\mathcal{M}$ takes the form
\[
	T = T_{uu}(u,v) du^2 + 2 T_{uv}(u,v) du dv + T_{vv}(u,v) dv^2 + T_{AB}(u,v) d\theta^A d\theta^B.
\]
In double null gauge, the components take the form
\begin{align}
	T_{uu}(u,v)
	&
	=
	\frac{\Omega^4}{R^2} \frac{\pi}{2} \int_{0}^{\infty} \int_0^{\infty}
	f(u,v,p^v,L) 
	\, p^v
	\, L \, dL \,
	dp^v,
	\label{eq:emtensordoublenull1}
	\\
	T_{uv}(u,v)
	&
	=
	\frac{\Omega^4}{R^2} \frac{\pi}{2} \int_{0}^{\infty} \int_0^{\infty}
	f(u,v,p^v,L) 
	\, p^u
	\, L \, dL \,
	dp^v,
	\label{eq:emtensordoublenull2}
	\\
	T_{vv}(u,v)
	&
	=
	\frac{\Omega^4}{R^2} \frac{\pi}{2} \int_{0}^{\infty} \int_0^{\infty}
	f(u,v,p^v,L) 
	\, \frac{(p^u)^2}{p^v}
	\, L \, dL \,
	dp^v,
	\label{eq:emtensordoublenull3}
\end{align}
with $p^u = p^u(u,v,p^v,L)$ defined by \eqref{eq:defofpu}.  It moreover follows from the mass shell relation \eqref{eq:defofpu} that
\[
	\gamma^{AB} T_{AB} = \frac{R^2}{4\Omega^2} T_{uv}.
\]

\subsection{The spherically symmetric Einstein--massless Vlasov system in double null gauge}
\label{subsec:ssEmVdng}

The Einstein equations \eqref{eq:EV3} in spherical symmetry take the form of the following system of equations for $(\Omega, R,f)$,
\begin{align}
	\partial_u \partial_v (R^2)
	&
	=
	-\frac{\Omega^2}{2} + R^2 T_{uv},
	\label{eq:Einsteinss1}
	\\
	\partial_u \partial_v \log \Omega^2
	&
	=
	\frac{\Omega^2}{2 R^2} \big( 1 + 4 \Omega^{-2} \partial_u R \partial_v R \big)
	-
	T_{uv}
	-
	\frac{\Omega^2}{4R^2} \gamma^{AB} T_{AB},
	\label{eq:Einsteinss2}
	\\
	\partial_u(\Omega^{-2} \partial_u R)
	&
	=
	- \frac{1}{2} R \Omega^{-2} T_{uu},
	\label{eq:Einsteinss3}
	\\
	\partial_v(\Omega^{-2} \partial_v R)
	&
	=
	- \frac{1}{2} R \Omega^{-2} T_{vv},
	\label{eq:Einsteinss4}
\end{align}
where $\gslash = R^2 \gamma$.  The mass shell relation \eqref{eq:EV2} takes the form
\begin{equation} \label{eq:massshell}
	- \Omega^2 p^u p^v + R^2 \gamma_{AB} p^A p^B = 0.
\end{equation}
In particular, $\Omega^2 T^{uv} = R^2 \gamma_{AB} T^{AB}$, or $4 \Omega^{-2} T_{uv} = R^{-2} \gamma^{AB} T_{AB}$, and so equation \eqref{eq:Einsteinss2} can be rewritten
\begin{align}
	\partial_u \partial_v \log \Omega^2
	&
	=
	\frac{\Omega^2}{2 R^2} \big( 1 + 4 \Omega^{-2} \partial_u R \partial_v R \big)
	-
	2 T_{uv},
	\label{eq:Einsteinss2var}
\end{align}
The Vlasov equation \eqref{eq:EV1} in spherical symmetry takes the form
\begin{align} \label{eq:Vlasovss}
	X(f)
	\equiv
	p^u \partial_u f
	+
	p^v \partial_v f
	-
	\Big(R^2 \gamma_{AB} p^A p^B \frac{2 \nu}{R \Omega^2}
	+
	(p^v)^2 \partial_v \log \Omega^2 \Big) \partial_{p^v} f
	-
	2
	\frac{\lambda p^v + \nu p^u}{R} L \partial_{L} f
	=
	0.
\end{align}

The angular momentum $L \colon P \to [0,\infty)$ defined by \eqref{eq:am} is preserved by the geodesic flow
\begin{equation} \label{eq:consofam}
	X(L) = 0.
\end{equation}

The system \eqref{eq:Einsteinss1}--\eqref{eq:Vlasovss} is equivalent to the Einstein--massless Vlasov system \eqref{eq:EV1}--\eqref{eq:EV4} in the sense that any spherically symmetric solution $(\mathcal{M},g,f)$ of \eqref{eq:Einsteinss1}--\eqref{eq:Vlasovss} defines a solution $(\mathcal{Q},\Omega^2,R,f)$ of the reduced system \eqref{eq:Einsteinss1}--\eqref{eq:Vlasovss} and, conversely, any solution $(\mathcal{Q},\Omega^2,R,f)$ of the reduced system \eqref{eq:Einsteinss1}--\eqref{eq:Vlasovss} defines a solution of \eqref{eq:EV1}--\eqref{eq:EV4} with $\mathcal{M} = \mathcal{Q} \times S^2$ and $g$ defined by \eqref{eq:gdoublenull}.

See \cite{Mowp} for more on the spherically symmetric Einstein--massless Vlasov system.

\subsection{The functions $t$ and $r$} \label{subsec:tandr}
Define functions $t$ and $r$ by
\[
	t(u,v) = \frac{1}{4}(v+u)^2, \qquad r(u,v) = v - u.
\]
When the FLRW spacetime \eqref{eq:FLRW1}--\eqref{eq:FLRW2} is expressed in the double null gauge introduced in Section \ref{subsec:FLRWdoublenull} below, the functions $t$ and $r$ coincide with those of Section \ref{subsec:FLRWintro}.

\subsection{The Cauchy problem}
\label{subsec:Cauchyproblem}

The Einstein--massless Vlasov system \eqref{eq:EV1}--\eqref{eq:EV4} admits the following local existence theorem.  Initial data consists of a Riemannian $3$ manifold $(\Sigma,\overline{g})$, along with a symmetric $(0,2)$ tensor $K$ on $\Sigma$, and a function $f_1 \colon T\Sigma \to [0,\infty)$, satisfying constraint equations
\begin{align*}
	\overline{R} - \vert K \vert_{\overline{g}}^2 + (\overline{\tr} K)^2
	=
	2 \sqrt{\det \overline{g}} \int f_1 \vert p \vert_{\overline{g}} dp^1 dp^2 dp^3,
	\qquad
	\overline{\mathrm{div}} K^i - \nabla^i \overline{\tr} K
	=
	\sqrt{\det \overline{g}} \int f_1 p^i dp^1 dp^2 dp^3,
\end{align*}
where $\overline{R}$ is the scalar curvature and $\overline{\nabla}$ is the Levi-Civita connection of $\overline{g}$.  A development of initial data $(\Sigma, \overline{g},K,f_1)$ is a globally hyperbolic spacetime $(\mathcal{M},g)$ which admits $\Sigma$ as a Cauchy hypersurface, with induced first and second fundamental form $\overline{g}$ and $K$ respectively, such that the restriction of $f$ to the mass shell over $\Sigma$, $P\vert_{\Sigma}$, coincides with $f_1$, when $T \Sigma$ is appropriately identified with $P\vert_{\Sigma}$.  See \cite{Ri} for more on the Cauchy problem for the Einstein--Vlasov system.

\begin{theorem}[Local well posedness of the Cauchy problem for the Einstein--massless Vlasov system \cite{ChBr71,ChBrGe}] \label{thm:Cauchyproblem}
	For any smooth initial data set $(\Sigma, \overline{g},K,f_1)$ for the Einstein--massless Vlasov system, as above, there exists a unique maximal development $(\mathcal{M},g,f)$ solving the system \eqref{eq:EV1}--\eqref{eq:EV4}.  Moreover, if $(\Sigma, \overline{g},K,f_1)$ is spherically symmetric then the maximal development $(\mathcal{M},g,f)$ is spherically symmetric.
\end{theorem}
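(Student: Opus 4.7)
The plan is to follow the standard Choquet-Bruhat strategy: reduce the Einstein equations to a quasilinear hyperbolic system by a suitable gauge choice, couple it to the transport equation \eqref{eq:EV1} through a contraction mapping / Picard iteration, and then obtain the maximal development by a Zorn-type argument.

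First I would fix a wave (harmonic) coordinate gauge in a neighbourhood of $\Sigma$, so that the Einstein equations \eqref{eq:EV3} become the reduced system
\[
    \Box_g g_{\mu\nu} = F_{\mu\nu}(g,\partial g) + 2 T_{\mu\nu}(g,f) - (\text{trace term}),
\]
a quasilinear system of wave equations for the metric components with source quadratic in $\partial g$ and linear in the Vlasov energy-momentum tensor \eqref{eq:EV4}. Data for this reduced system on $\Sigma$ is built in the usual way from $(\overline g, K)$, choosing $\partial_t g_{\mu\nu}\vert_{\Sigma}$ so that the harmonic gauge condition $\Gamma^\mu = 0$ holds initially; the constraint equations precisely ensure that this choice is consistent with the prescribed $\overline g$, $K$. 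The Vlasov function $f_1$ on $T\Sigma$ is transferred to data on the mass shell $P\vert_\Sigma$ using the identification furnished by $g\vert_\Sigma$.

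Next I would run a Picard iteration: given a metric $g^{(n)}$, solve \eqref{eq:EV1} for $f^{(n+1)}$ via the method of characteristics (the geodesic spray on $P$ is a smooth vector field, and $f^{(n+1)}$ is just $f_1$ transported along its integral curves); then compute the Vlasov energy-momentum tensor \eqref{eq:EV4}, which requires controlling the support in $p$ of $f^{(n+1)}$ so the $p$-integral is finite and smooth in $x$; and finally solve the reduced quasilinear wave system for $g^{(n+1)}$ by standard energy estimates for hyperbolic systems. Contraction in high Sobolev norms in a thin slab $[0,\epsilon]\times\Sigma$ yields a fixed point $(g,f)$ solving the reduced system. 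The main technical point here is the interplay of regularity: one needs enough derivatives of $g$ (in $C^1$) for the geodesic flow on $P$ to be Lipschitz, and correspondingly enough momentum-decay of $f$ to bound $T_{\mu\nu}$ and its derivatives — this is routine but is where most of the analytic work sits.

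To upgrade the solution of the reduced system to a genuine solution of \eqref{eq:EV3}, I would verify propagation of the harmonic gauge condition $\Gamma^\mu = 0$: using the twice-contracted Bianchi identity $\nabla_\mu(R^{\mu\nu}-\tfrac12 R g^{\mu\nu})=0$ together with $\nabla_\mu T^{\mu\nu}=0$ (which follows from \eqref{eq:EV1} integrated against $p^\nu$ over $P_x$), one shows that $\Gamma^\mu$ satisfies a homogeneous linear wave equation and so vanishes throughout the domain of existence since it vanishes with its first time derivative on $\Sigma$ (the latter by the constraints). This gives a local development $(\mathcal M,g,f)$.

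Finally, maximality is the abstract Choquet-Bruhat--Geroch step: the set of isomorphism classes of smooth globally hyperbolic developments of $(\Sigma,\overline g,K,f_1)$ is partially ordered by common extension, any chain has an upper bound (take the union with the natural identifications supplied by uniqueness on overlaps), so Zorn's lemma yields a maximal element, which is unique up to isometry by the same uniqueness statement. Spherical symmetry of the maximal development is then automatic: if $\mathcal O$ is the $SO(3)$ action on $(\Sigma,\overline g,K,f_1)$, then for each $\mathfrak s\in SO(3)$ pulling back by $\mathfrak s$ produces another maximal development with the same data, hence isometric to the original by uniqueness, and these isometries assemble into an $SO(3)$ action on $(\mathcal M, g)$ preserving $f$. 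I expect the main obstacle to be the regularity bookkeeping in the Picard step — ensuring that the $p$-support of the iterates stays controlled long enough for \eqref{eq:EV4} to close in the same Sobolev class as the metric estimates demand — rather than any of the geometric or gauge-theoretic steps.
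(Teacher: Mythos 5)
Your outline is the standard Choquet--Bruhat local existence argument (harmonic gauge reduction, iteration coupling the reduced wave system to the transport equation, gauge propagation via the Bianchi identity and $\nabla_\mu T^{\mu\nu}=0$) combined with the Choquet-Bruhat--Geroch maximal development and the uniqueness-implies-symmetry argument, which is precisely the content of the references \cite{ChBr71,ChBrGe} that the paper cites in lieu of a proof. This matches the intended justification; the only point deserving explicit care in the massless case is that the mass shell is singular at the zero section, so one must keep the momentum support of the iterates away from $p=0$ (or impose suitable decay) for \eqref{eq:EV4} and the characteristic flow to remain smooth, which falls under the ``regularity bookkeeping'' you already flag.
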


See also \cite{Mowp} for a version of Theorem \ref{thm:Cauchyproblem} in a more general spherically symmetric setting.

The following extension principle for spherically symmetric solutions of the system \eqref{eq:EV1}--\eqref{eq:EV4}, concerning \emph{non-central points}, will be used in what follows.  It is assumed that $(\mathcal{M},g,f)$ is the maximal development of a smooth spherically symmetric initial data set $(\Sigma, \overline{g},K,f_1)$ for the Einstein--massless Vlasov system \eqref{eq:EV1}--\eqref{eq:EV4}, and that $\mathcal{Q}$ is the quotient manifold of $\mathcal{M}$ by the action of the $SO(3)$ isometry, as in Section \ref{subsec:sphericalsymmetry} above.  

\begin{theorem}[Extension principle, around \emph{non-central points}, for the spherically symmetric Einstein--massless Vlasov system \cite{DaRe07}] \label{thm:noncentext}
	Let $\mathcal{Q}$ be as above.  Let $(u_*,v_*)$ be such that there exists $U<u_*$ and $V <v_*$ such that the characteristic diamond $\mathcal{D}_{U,V}^{u_*,v_*} = \{ U \leq u < u_*, V \leq v < v_*\}$ is contained in $\mathcal{Q}$,
	\begin{equation} \label{eq:noncentral}
		\mathcal{D}_{U,V}^{u_*,v_*} \subset \mathcal{Q}.
	\end{equation}
	Assume also that there exists $0 < R_0 < R_1$ such that
	\[
		\int^{u_*}_U\int^{v_*}_V 
		\Omega^2(u,v) du dv < \infty,
		\qquad
		\text{and}
		\qquad
		R_0 \leq R(u,v) \leq R_1,
		\quad
		\text{for all } (u,v) \in \mathcal{D}_{U,V}^{u_*,v_*},
	\]
	and that $f(u,v,\cdot,\cdot)$ is compactly supported for all $(u,v) \in \mathcal{D}_{U,V}^{u_*,v_*}$.  Then $(u_*,v_*) \in \mathcal{Q}$.
\end{theorem}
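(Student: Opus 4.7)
The plan is to show that every quantity appearing in the reduced system \eqref{eq:Einsteinss1}--\eqref{eq:Vlasovss} extends continuously from $\mathcal{D}_{U,V}^{u_*,v_*}$ to the closed rectangle $\{U\leq u\leq u_*,\, V\leq v\leq v_*\}$, with $f$ retaining compactly supported momenta. Once this is established, the continuously extended geometry and particle density provide spherically symmetric induced data on the outgoing and incoming null rays emanating from $(u_*,v_*)$, and a short application of local well-posedness past these rays contradicts maximality unless $(u_*,v_*)\in\mathcal{Q}$.

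First I would exploit the Raychaudhuri equations \eqref{eq:Einsteinss3}--\eqref{eq:Einsteinss4}. Since the integrands in \eqref{eq:emtensordoublenull1} and \eqref{eq:emtensordoublenull3} are non-negative, one has $T_{uu},T_{vv}\geq 0$, and hence $\Omega^{-2}\nu$ is monotone in $u$ and $\Omega^{-2}\lambda$ is monotone in $v$. Integrating $\nu=\partial_u R$ and $\lambda=\partial_v R$ together with $R_0\leq R\leq R_1$ then provides uniform pointwise bounds on $\nu/\Omega^2$ and $\lambda/\Omega^2$ throughout $\mathcal{D}_{U,V}^{u_*,v_*}$. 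Combining these with the integrability of $\Omega^2$ and with \eqref{eq:Einsteinss2var}, one obtains continuous extensions of $R$, $\nu/\Omega^2$, $\lambda/\Omega^2$ and $\log\Omega^2$ to the corner.

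The principal analytic difficulty is to control the momentum support of $f$ as the corner is approached. Since $f$ is constant along characteristics of the Vlasov field \eqref{eq:Vlasovss} and the angular momentum $L$ is preserved by \eqref{eq:consofam}, it suffices to bound $p^v$ along these characteristics. Parameterising so that $du/ds=p^u$ and $dv/ds=p^v$, the equation for $p^v$ reads
\[
\frac{dp^v}{ds}=-\frac{2\nu L^2}{R^3\Omega^2}-(p^v)^2\,\partial_v\log\Omega^2.
\]
The first term is uniformly bounded by the estimates on $\nu/\Omega^2$ and the preserved bound on $L$. The second is the potentially dangerous one, but dividing by $(p^v)^2$ and integrating against $ds=dv/p^v$ along a characteristic converts its contribution to an integral of $\partial_v\log\Omega^2$ in $v$, which is bounded by the continuous extension of $\log\Omega^2$ obtained in the previous step. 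A Gronwall argument then yields a uniform upper bound on $p^v$ on the support of $f$ in $\mathcal{D}_{U,V}^{u_*,v_*}$, and $p^u$ is controlled via \eqref{eq:defofpu}. It follows that the matter integrals \eqref{eq:emtensordoublenull1}--\eqref{eq:emtensordoublenull3} are uniformly bounded.

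With bounded sources on a bounded domain, the equations \eqref{eq:Einsteinss1} and \eqref{eq:Einsteinss2var} become transport equations for $R^2$ and $\log\Omega^2$ with bounded right hand sides, so all metric components and their first derivatives admit continuous extensions to $(u_*,v_*)$; likewise $f$ extends continuously with the momentum support bound just obtained. A short characteristic local existence argument across the extended null rays through $(u_*,v_*)$, or equivalently an application of Theorem \ref{thm:Cauchyproblem} on a spacelike slice drawn just prior to the corner, then contradicts maximality and forces $(u_*,v_*)\in\mathcal{Q}$. The hardest step is expected to be the $p^v$ bound, where the merely integrable (and not pointwise bounded) behaviour of $\Omega^2$ must be combined with the monotonicity of $\Omega^{-2}\nu$ and $\Omega^{-2}\lambda$ and with conservation of $L$ in order to close the Gronwall estimate.
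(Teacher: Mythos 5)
The paper does not reprove this statement; it invokes Section 4.3 of \cite{DaRe07}, and your outline does reproduce the correct skeleton of that argument (Raychaudhuri monotonicity, conservation of $L$, a characteristic estimate for the momentum support, then boundedness of $T_{\mu\nu}$ and a local existence argument at the corner). However, the step you yourself identify as the hardest --- the upper bound on $p^v$ over $\supp(f)$ --- is not closed as written, for two reasons. First, the argument is circular: you invoke ``the continuous extension of $\log\Omega^2$ obtained in the previous step'' to control the $\partial_v\log\Omega^2$ term, but that extension was itself obtained from \eqref{eq:Einsteinss2var}, whose right-hand side contains $T_{uv}$, and $T_{uv}$ cannot be bounded before the momentum support is controlled. (Similarly, the monotonicity of $\Omega^{-2}\nu$ in $u$ gives only a one-sided bound by its values on $\{u=U\}$; the asserted two-sided pointwise bound on $\nu/\Omega^2$ does not follow from $R_0\le R\le R_1$, since $\int_U^{u}\nu\,du'$ being bounded is perfectly compatible with $\Omega^{-2}\nu\to-\infty$ wherever $\Omega^2\to0$.) Second, even granting control of $\log\Omega^2$ up to the corner, the quantity your manipulation produces is at best $\int_\gamma\partial_v\log\Omega^2\,dv$ taken along the characteristic, and this is \emph{not} the total variation of $\log\Omega^2$ along $\gamma$, because $u$ varies along $\gamma$ as well: $\frac{d}{ds}\log\Omega^2=p^u\partial_u\log\Omega^2+p^v\partial_v\log\Omega^2$. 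Boundedness of $\log\Omega^2$ therefore does not bound your integral; one needs a $u$-uniform, $v$-integrable bound on $\partial_v\log\Omega^2$ itself, which again requires $T_{uv}$.

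The cited proof resolves this circularity with a genuine bootstrap: one introduces the running supremum of the momenta over $\supp(f)$, uses it to bound the energy momentum components, feeds these into the integrated forms of \eqref{eq:Einsteinss2var} and \eqref{eq:Einsteinss3}--\eqref{eq:Einsteinss4} (exploiting the favourable signs with which $T_{uu}$, $T_{uv}$, $T_{vv}$ enter) to control $\int\partial_u\log\Omega^2\,du$ and $\int\partial_v\log\Omega^2\,dv$ along characteristics, and closes with a Gr\"onwall argument. A further gap: you dismiss $p^u$ with ``controlled via \eqref{eq:defofpu}'', but $p^u=L^2/(\Omega^2R^2p^v)$ requires a \emph{lower} bound on $\Omega^2p^v$ on $\supp(f)$, which you have not established; without it the integrals in \eqref{eq:emtensordoublenull2}--\eqref{eq:emtensordoublenull3}, which effectively carry a $dp^v/p^v$ weight, need not converge. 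The two null directions must be estimated symmetrically. The remaining steps of your outline (extension of the metric quantities once $T$ is bounded, and local existence past the corner) are standard and fine.
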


For a proof of Theorem \ref{thm:noncentext} see Section 4.\@3 of \cite{DaRe07} (where one has to replace the massive mass shell relation with its massless analogue \eqref{eq:massshell}, which does not affect the proof).  Theorem \ref{thm:noncentext} is preferred to the softer extension principle of \cite{DaRe} in view of the presence of the anti-trapped surfaces in the spacetimes under consideration (see Remark \ref{rmk:antitrapped} below).

\section{The FLRW spacetimes}
\label{section:FLRW}

Recall the FLRW spacetimes introduced in Section \ref{subsec:FLRWintro}.  In Section \ref{subsec:FLRWdoublenull} a double null gauge is introduced in each of these spacetimes.  Section \ref{subsec:mvfixedbackground} concerns properties of the massless Vlasov equation on these spacetimes.  Though the main theorem of Section \ref{subsec:mvfixedbackground}, Theorem \ref{thm:fixedbackground}, is not used in the proof of Theorem \ref{thm:main}, its proof is presented as a simple precursor to that of Theorem \ref{thm:main}.  There is no symmetry assumption required, however, for Theorem \ref{thm:fixedbackground}.

\subsection{The FLRW metrics in double null gauge}
\label{subsec:FLRWdoublenull}

Recall that the FLRW spacetime $(\mathcal{M}_{\circ},g_{\circ})$ takes the form
\[
	\mathcal{M}_{\circ} = (0,\infty) \times \mathbb{R}^3,
	\qquad
	g_{\circ} = -dt^2 + t \left( \frac{4\varrho}{3} \right)^{\frac{1}{2}} \big( (dx^1)^2 + (dx^2)^2 + (dx^3)^2 \big).
\]
For $t\in (0,\infty)$, define
\[
	\Sigma_t = \{t \} \times \mathbb{R}^3 \subset \mathcal{M}_{\circ}.
\]

Define double null coordinates
\[
	u = t^{\frac{1}{2}} - \frac{r}{2},
	\qquad
	v = t^{\frac{1}{2}} + \frac{r}{2},
	\qquad
	\text{where }
	r = \left( \frac{4\varrho}{3} \right)^{\frac{1}{4}} \sqrt{(x^1)^2 + (x^2)^2 + (x^3)^2},
\]
and note that, since $t\geq 0$ and $r\geq 0$ in $\mathcal{M}_{\circ}$,
\[
	v \geq 0, \qquad v \geq u, \qquad v \geq - u, \qquad \text{in } \mathcal{M}_{\circ}.
\]

The FLRW metric $g_{\circ}$ in the above double null gauge takes the form
\begin{equation} \label{eq:FLRWdoublenull}
	g_{\circ} = -4t du dv + tr^2 \gamma,
	\qquad
	t = \frac{1}{4}(v+u)^2,
	\quad
	r = v-u,
\end{equation}
defined on the quotient manifold
\[
	\mathcal{Q}_{\circ} = \{ (u,v) \in \mathbb{R}^2 \mid v \geq 0, v \geq u, v \geq - u \}.
\]

The metric $g_{\circ}$ can be written
\[
	g_{\circ} = -\Omega_{\circ}^2 du dv + R_{\circ}^2 \gamma,
	\qquad
	\text{where}
	\qquad
	\Omega^2_{\circ} = 4t,
	\qquad
	R_{\circ} = t^{\frac{1}{2}} r,
\]
and moreover
\[
	\lambda_{\circ} = \partial_v R_{\circ} = v = \frac{r}{2} + t^{\frac{1}{2}},
	\qquad
	\nu_{\circ} = \partial_u R_{\circ} = - u = \frac{r}{2} - t^{\frac{1}{2}},
	\qquad
	\sqrt{-\det g_{\circ}} = 2t^2r^2 \sqrt{\det \gamma}.
\]

\begin{remark}[Anti-trapped spheres] \label{rmk:antitrapped}
	Note that the FLRW spacetime $(\mathcal{M}_{\circ},g_{\circ})$ contains \emph{anti-trapped} spheres of symmetry: if $(u,v)\in \mathcal{Q}_{\circ}$ is such that $u <0$, then $\lambda_{\circ}(u,v) > 0$ and $\nu_{\circ}(u,v) >0$.
\end{remark}

The non-vanishing Christoffel symbols of \eqref{eq:FLRWdoublenull} take the form
\[	
	\mathring{\Gamma}^u_{AB} = \frac{rv}{2t^{\frac{1}{2}}} \gamma_{AB} = \frac{1}{2} \left( \frac{1}{2t^{\frac{1}{2}}} + \frac{1}{r} \right) r^2 \gamma_{AB},
	\qquad
	\mathring{\Gamma}^v_{AB} = - \frac{ru}{2t^{\frac{1}{2}}} \gamma_{AB} = \frac{1}{2} \left( \frac{1}{2t^{\frac{1}{2}}} - \frac{1}{r} \right) r^2 \gamma_{AB},
\]
\[
	\mathring{\Gamma}_{uu}^u = \mathring{\Gamma}_{vv}^v = t^{-\frac{1}{2}},
	\qquad
	\mathring{\Gamma}^A_{Bv} = \left( \frac{1}{2t^{\frac{1}{2}}} + \frac{1}{r} \right) \delta_A^B,
	\qquad
	\mathring{\Gamma}^A_{Bu} = \left( \frac{1}{2t^{\frac{1}{2}}} - \frac{1}{r} \right) \delta_A^B,
	\qquad
	\mathring{\Gamma}^A_{BC} = \Gammaslash^A_{BC},
\]
where $\Gammaslash^A_{BC}$ are the Christoffel symbols of $(S^2,\gamma)$.

It follows from \eqref{eq:EMFLRW} that the null components of the energy momentum tensor of $f_{\circ}$ satisfy
\begin{equation} \label{eq:emFLRWdoublenull}
	T^{\circ}_{uu} = \frac{1}{t},
	\qquad
	T^{\circ}_{uv} = \frac{1}{2t},
	\qquad
	T^{\circ}_{vv} = \frac{1}{t}.
\end{equation}
The mass shell relation \eqref{eq:massshell} takes the form
\begin{equation} \label{eq:massshellFLRW}
	-4p^u p^v + r^2 \gamma_{AB} p^A p^B = 0.
\end{equation}
Recall the expression \eqref{eq:defofpu} for $p^u$, and define
\[
	\mathring{p}^u = \frac{r^2 \gamma_{AB} p^A p^B}{4p^v}.
\]
Where there is no ambiguity, for example in Section \ref{subsec:mvfixedbackground}, $\mathring{p}^u$ will also be denoted $p^u$.

Given a solution $(\mathcal{Q},\Omega^2,R,f)$ of the spherically symmetric Einstein--massless Vlasov system in double null gauge \eqref{eq:Einsteinss1}--\eqref{eq:Vlasovss}, one identifies with the spatially homogeneous FLRW solutions \eqref{eq:FLRW1}--\eqref{eq:FLRW2} using values of the coordinates $(u,v)$ coordinates for $\mathcal{Q}$, and the values of the coordinates $(u,v,p^v,L)$ for $\mathcal{Q} \times [0,\infty) \times [0,\infty)$, along with the above double null gauge for the spatially homogeneous FLRW solutions.  For example 
\[
	(\Omega^2 - \Omega^2_{\circ})(u,v) = \Omega^2(u,v) - (v+u)^2.
\]
Note in particular that, if $f = f_{\circ}$, then
\[
	\Omega^{-4} R^{-2} T_{uu} = \Omega^{-4}_{\circ} R^{-2}_{\circ} T^{\circ}_{uu},
	\qquad
	\Omega^{-2} R^{-4} T_{uv} = \Omega^{-2}_{\circ} R^{-4}_{\circ} T^{\circ}_{uv},
	\qquad
	R^{-6} T_{vv} = R^{-6}_{\circ} T^{\circ}_{vv}.
\]
The asymmetry in $u$ and $v$ arises from the choice of parameterising the mass shell by $p^v$, rather than $p^u$.

\subsection{The massless Vlasov equation on an FLRW background}
\label{subsec:mvfixedbackground}

The proof of Theorem \ref{thm:main} is based on the following proof of decay of components of solutions of the massless Vlasov equation on a fixed FLRW backround.

\begin{theorem}[Solutions of massless Vlasov equation on a fixed FLRW background with $\mathbb{R}^3$ spatial topology] \label{thm:fixedbackground}
	Let $f$ be a solution of the massless Vlasov equation \eqref{eq:EV1} on $((0,\infty)\times \mathbb{R}^3,g_{\circ})$, where $g_{\circ}$ is the FLRW metric \eqref{eq:FLRWdoublenull}, such that $f_1 = f\vert_{\{t=1\}}$ is compactly supported.  The components of the energy momentum tensor satisfy
	\begin{equation} \label{eq:EMtensoronFLRW}
		T_{uu} \lesssim \frac{\Vert f_1 \Vert_{L^{\infty}}}{t^2},
		\qquad
		T_{uv} \lesssim \frac{\Vert f_1 \Vert_{L^{\infty}}}{t^3},
		\qquad
		T_{vv} \lesssim \frac{\Vert f_1 \Vert_{L^{\infty}}}{t^4},
	\end{equation}
	for $t\geq 1$.  Moreover, there exist $U_0 \leq U_1$ such that
	\begin{equation} \label{eq:FLRWsuppf}
		\pi(\supp(f)) \subset \{ U_0 \leq u \leq U_1 \},
	\end{equation}
	where $\pi \colon P \to \mathcal{M}$ denotes the natural projection.
\end{theorem}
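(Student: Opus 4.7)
My strategy is to pass to the Cartesian coordinates $(t,x^1,x^2,x^3)$ on $\mathcal{M}_\circ$ and exploit the explicit integrability of the massless geodesic flow on FLRW.  Translation invariance of $g_\circ$ yields that $a(t)^2 p^i$ is conserved along the geodesic flow, and hence so are both $a(t)^2|p|$ (where $|p|^2 = \sum_i (p^i)^2$) and the direction $\hat p := p/|p|$.  The spatial trajectory takes the explicit form $x(t) = x_1 + c(t)\hat p_1$ with $c(t) := \int_1^t ds/a(s) = 2(4\varrho/3)^{-1/4}(\sqrt{t}-1)$.  Since $f$ is constant along characteristics and $f_1$ has compact support, say in $\{|x_1|\le R_1,\,|p_1|\le P_1\}$, the support of $f(t,\cdot,\cdot)$ in phase space is sharply constrained: writing $\alpha$ for the angle between $\hat x$ and $\hat p$, one obtains
\[
    |p| \lesssim \frac{1}{t}, \qquad \big|\,|x| - c(t)\,\big| \le R_1, \qquad \sin\alpha \lesssim \frac{1}{\sqrt{t}},
\]
valid on $\supp f(t,\cdot,\cdot)$ for $t \gtrsim 1$.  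The last bound follows from the identity $\hat x - \hat p = x_1^\perp/|x| + O(1/c(t)^2)$, where $x_1^\perp$ is the component of $x_1$ perpendicular to $\hat p_1$.  Combined with $u = \sqrt{t} - (4\varrho/3)^{1/4}|x|/2$, the spatial bound immediately gives the $u$-support statement \eqref{eq:FLRWsuppf} with $U_{0,1} = 1 \mp (4\varrho/3)^{1/4} R_1/2$.

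The central calculation is an algebraic identity expressing the null components of $T$ in terms of the angle $\alpha$.  Using $\partial_u = \sqrt{t}\,\partial_t - (4\varrho/3)^{-1/4}\partial_{\tilde r}$ and $\partial_v = \sqrt{t}\,\partial_t + (4\varrho/3)^{-1/4}\partial_{\tilde r}$ (with $\tilde r := |x|$), the Cartesian momentum representations $T^{tt} = a^4\int f |p|\,dp$, $T^{t\tilde r} = a^3\int f|p|\cos\alpha\,dp$, $T^{\tilde r\tilde r} = a^2\int f|p|\cos^2\alpha\,dp$, and the identities $a^2 = t(4\varrho/3)^{1/2}$, $a^4 = t^2(4\varrho/3)$, one verifies that the three Cartesian contributions to each null component all carry the common prefactor $t^3(4\varrho/3)$ and assemble into perfect squares
\begin{align*}
    T_{uu} &= t^3(4\varrho/3)\int f |p|(1+\cos\alpha)^2\,dp, \\
    T_{uv} &= t^3(4\varrho/3)\int f |p|\sin^2\alpha\,dp, \\
    T_{vv} &= t^3(4\varrho/3)\int f |p|(1-\cos\alpha)^2\,dp.
\end{align*}
The vanishing of $(1-\cos\alpha)^2$, respectively $(1+\cos\alpha)^2$, for purely outgoing, respectively ingoing, null particles reflects the fact that such particles carry no flux across outgoing, respectively ingoing, null hypersurfaces; this is the physical source of the improved decay of $T_{uv}$ and $T_{vv}$.

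Inserting $f \le \|f_1\|_{L^\infty}$ and integrating over the momentum support in spherical coordinates $dp = |p|^2 d|p|\sin\alpha\,d\alpha\,d\phi$, using the support bounds from the first step together with the elementary inequalities $\sin^2\alpha \lesssim \alpha^2$ and $(1-\cos\alpha)^2 \lesssim \alpha^4$, one finds that the three momentum integrals decay as $t^{-5}$, $t^{-6}$ and $t^{-7}$ respectively; multiplying by $t^3(4\varrho/3)$ delivers the three rates in \eqref{eq:EMtensoronFLRW}.  I expect the derivation of the three algebraic identities above to be the step requiring the most care, since it is precisely these cancellations that make $T_{uv}$ and $T_{vv}$ decay one and two powers of $t$ faster than the naive bound; everything else is a routine consequence of the conservation laws and the compactness of $\supp f_1$.
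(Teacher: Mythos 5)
Your proposal is correct, but it takes a genuinely different route from the paper. The paper works entirely in the double null gauge: it derives the geodesic equations for $p^u,p^v$ (equations \eqref{eq:FLRWgeo1}--\eqref{eq:FLRWgeo2}), extracts the key bounds $p^v\sim p^v(s_0)/t$, $p^u\lesssim p^v/(tr^2)$, $\gamma_{AB}p^Ap^B\lesssim L_0^2/(tr^2)^2$ from ODE monotonicity identities such as $\tfrac{d}{ds}\big(t(p^v-p^u)\big)\geq 0$ and $\tfrac{d}{ds}\big(r^2p^u/p^v\big)\leq 0$ (treating radial geodesics through the centre separately), and then inserts these into the double--null momentum integrals \eqref{eq:emtensordoublenull1}--\eqref{eq:emtensordoublenull3}, where the relative gain of one power of $t$ per replacement $p^v\mapsto p^u$ produces the hierarchy $t^{-2},t^{-3},t^{-4}$. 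You instead exploit the exact integrability of the flow in Cartesian coordinates ($a^2p^i$ conserved, straight comoving trajectories), turn this into support constraints $|p|\lesssim t^{-1}$, $\sin\alpha\lesssim t^{-1/2}$, and encode the hierarchy in the algebraic weights $(1+\cos\alpha)^2$, $\sin^2\alpha$, $(1-\cos\alpha)^2$; I have checked these identities and the power counting ($t^3\cdot t^{-5}$, $t^3\cdot t^{-6}$, $t^3\cdot t^{-7}$), and they are right --- your weights are just the Cartesian avatars of the paper's $p^v$, $p^u$, $(p^u)^2/p^v$, since $p^v\propto|p|(1+\cos\alpha)$ and $p^u\propto|p|(1-\cos\alpha)$. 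What your route buys is explicitness: no case analysis for radial geodesics hitting the centre (where the double null chart degenerates), and a transparent physical interpretation of the improved decay of $T_{uv}$ and $T_{vv}$. What the paper's route buys is robustness: Proposition \ref{prop:geodesics} is designed so that its monotonicity arguments survive perturbation and are reused almost verbatim in the bootstrap argument of Theorem \ref{thm:bootstrap}, whereas your Cartesian integrability is destroyed the moment the background is no longer exactly FLRW. Two minor points to tidy up: the angle bound $\sin\alpha\lesssim t^{-1/2}$ only kicks in once $c(t)\gg R_1$ (and $\hat{x}$ is momentarily undefined for trajectories passing through the spatial origin), so you should dispose of the compact time interval $1\leq t\leq T_0(R_1)$ by continuity/compactness, exactly as the paper does by taking $v_0$ large; and you should note that $1-\cos\alpha\leq\tfrac12|\hat{x}-\hat{p}|^2\lesssim t^{-1}$ follows from your displacement identity, which is what actually delivers the $\alpha^4$ gain for $T_{vv}$.
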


The support of such solutions are depicted in Figure \ref{fig:FLRW}.

\begin{figure}
\centering{
\includegraphics[scale=0.4]{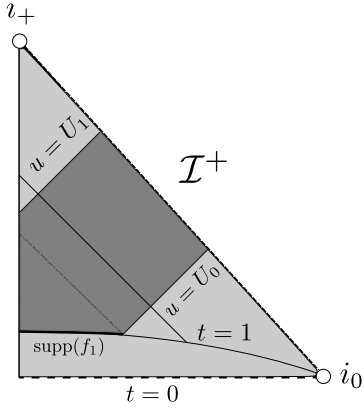}
\caption{Penrose diagram of the FLRW spacetime.  Solutions of the massless Vlasov equation arising from compactly supported initial data on $\{t=1\}$ are supported in the darker shaded region.  The solutions, of the coupled system, of Theorem \ref{thm:main} admit a similar Penrose diagram, where now the perturbation is supported in the darker shaded region.} \label{fig:FLRW}
}
\end{figure}

The proof of Theorem \ref{thm:fixedbackground} is based on properties of null geodesics in FLRW.  Given a null geodesic $\gamma\colon [1,\infty) \to \mathcal{M}$, write the tangent vector to $\gamma$ as
\[
	\dot{\gamma}(s) = p^u(s) \partial_u + p^v(s) \partial_v + p^A (s) \partial_{\theta^A}.
\]
The geodesic equations for $p^u(s)$ and $p^v(s)$ take the form
\begin{align}
	\dot{p}^u(s)
	+
	t^{-\frac{1}{2}} \big( p^u(s)\big)^2
	+
	\frac{1}{2} \left( \frac{1}{2t^{\frac{1}{2}}} + \frac{1}{r} \right) r^2 \gamma_{AB} p^A p^B
	&
	=
	0,
	\label{eq:FLRWgeo1}
	\\
	\dot{p}^v(s)
	+
	t^{-\frac{1}{2}} \big( p^v(s)\big)^2
	+
	\frac{1}{2} \left( \frac{1}{2t^{\frac{1}{2}}} - \frac{1}{r} \right) r^2 \gamma_{AB} p^A p^B
	&
	=
	0.
	\label{eq:FLRWgeo2}
\end{align}
One has equations for $t(s) = t(\gamma(s))$ and $r(s) = r(\gamma(s))$ along the geodesic
\begin{equation} \label{eq:FLRWtdotrdot}
	\dot{t}(s) = t(s)^{\frac{1}{2}} \big( p^v(s) + p^u(s) \big),
	\qquad
	\dot{r}(s) = p^v(s) - p^u(s).
\end{equation}
Note in particular that
\begin{equation} \label{eq:FLRWpr}
	\frac{d}{ds} \big( t(p^v - p^u) \big) = \frac{t}{r} r^2 \gamma_{AB} p^A p^B \geq 0.
\end{equation}

\begin{proposition}[Null geodesics in FLRW] \label{prop:geodesics}
	Let $\mathcal{B} \subset P\vert_{\Sigma_1}$ be a compact subset of the mass shell over $\Sigma_1$ and let $\gamma \colon [1,\infty) \to \mathcal{M}$ be a future directed null geodesic such that $(\gamma(1), \dot{\gamma}(1)) \in \mathcal{B}$ and let $v_0$ be sufficiently large.  There exist $0<c<1<C$, $L_0 \geq 0$, and $s_0 \in [1,\infty)$ such that $v(\gamma(s_0)) = v_0$ and the components of the tangent vector to $\gamma$ satisfy, for all $s\geq s_0$,
	\begin{equation} \label{eq:FLRWcomponents}
		\left( \gamma_{AB} p^A(s) p^B(s) \right)^{\frac{1}{2}} \leq \frac{L_0}{t(s) r(s)^2},
		\qquad
		0 \leq p^u(s) \leq \frac{Cp^v(s_0)}{t(s) r(s)^2},
		\qquad
		c \frac{p^v(s_0)}{t(s)} \leq p^v(s) \leq C \frac{p^v(s_0)}{t(s)}.
	\end{equation}
	Moreover, there exists retarded times $U_0 <0<U_1$ such that the $u$ component of $\gamma$ satisfies
	\begin{equation} \label{eq:FLRWubound}
		U_0 \leq u(s) \leq U_1,
		\qquad
		\text{for all }
		s \geq 1.
	\end{equation}
\end{proposition}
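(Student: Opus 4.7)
The plan is to deduce the four bounds from three ingredients: conservation of the angular momentum $L$, the mass-shell relation $p^u p^v = L^2/(4 t^2 r^2)$, and a monotonicity identity for $t p^v$ derived from the geodesic equation.

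The angular estimate is immediate: \eqref{eq:consofam} and \eqref{eq:am} give that $L^2 = t^2 r^4 \gamma_{AB} p^A p^B$ is conserved along the geodesic, so $(\gamma_{AB} p^A p^B)^{1/2} = L/(tr^2)$, establishing the first bound with $L_0 = \sup_{\mathcal{B}} L < \infty$. The mass-shell relation \eqref{eq:massshellFLRW} then yields the pointwise identity $p^u = L^2/(4 t^2 r^2 p^v)$, so $p^u \geq 0$ and $\dot{u} = p^u \geq 0$ along the geodesic.

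The crucial computation would be the identity
\[
\frac{d}{ds}(t p^v) = \frac{L^2}{2 t r^3},
\]
obtained by combining \eqref{eq:FLRWgeo2} with $\dot{t} = t^{1/2}(p^v + p^u)$ from \eqref{eq:FLRWtdotrdot}, substituting $r^2 \gamma_{AB} p^A p^B = L^2/(t^2 r^2)$ and using $t^{1/2} p^u p^v = L^2/(4 t^{3/2} r^2)$: the $t^{1/2}(p^v)^2$ terms and the two $L^2/(4 t^{3/2} r^2)$ terms both cancel, leaving only $L^2/(2 t r^3)$. An analogous calculation gives $\frac{d}{ds}(t p^u) = -L^2/(2 t r^3)$, so $E := t(p^v + p^u)$ is a constant of motion. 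Consequently $t p^v$ is non-decreasing and uniformly bounded above by $E$, yielding $t(s_0) p^v(s_0)/t(s) \leq p^v(s) \leq E/t(s)$ for $s \geq s_0$, which gives the two-sided bound in \eqref{eq:FLRWcomponents} with $c, C$ chosen appropriately. The upper bound on $p^u$ then follows from $p^u = L^2/(4 t^2 r^2 p^v)$ combined with the lower bound on $p^v$.

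It remains to establish \eqref{eq:FLRWubound}. Since $\dot{u} = p^u \geq 0$, $u$ is monotone non-decreasing, so it suffices to bound $u$ uniformly from above. On the compact interval $[1, s_0]$ the bound follows from continuity of the geodesic flow and compactness of $\mathcal{B}$. For $s \geq s_0$, I would run a bootstrap argument assuming $u(s) \leq U_1$ (which ensures $r(s) = v(s) - u(s) \geq v(s)/2$ once $v_0 \geq 2 U_1$). Using the change of variable $ds = dv/p^v$ and the bounds on $p^v$ and $r$,
\[
u(s) - u(s_0) = \int_{s_0}^{s} p^u \, d\tau = \int_{v_0}^{v(s)} \frac{L^2}{4 t^2 r^2 (p^v)^2} \, dv,
\]
and, since $t \sim v^2$ and $p^v \gtrsim 1/v^2$ in the bootstrap regime, the integrand is $O(v^{-2})$ and the integral is bounded by a quantity arbitrarily small for $v_0$ large, strictly improving the bootstrap hypothesis and extending it to $s_* = \infty$. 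The main obstacle is closing this bootstrap with constants depending uniformly on $\mathcal{B}$, which reduces to choosing $v_0$ large enough that the relevant integrals are sufficiently small.
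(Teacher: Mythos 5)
Your core computations are correct, and the identity $\tfrac{d}{ds}(tp^u)=-L^2/(2tr^3)$, hence conservation of the conformal energy $E=t(p^u+p^v)$, is a genuinely nice addition: the paper obtains the upper bound on $tp^v$ by integrating $\tfrac{d}{ds}(tp^v)=L^2/(2tr^3)$ after changing variables to $r$ via the monotonicity of $t(p^v-p^u)$, whereas your $tp^v\leq E$ is immediate. (Note also that, by the mass shell relation, $(tp^u)(tp^v)=L^2/(4r^2)$, so your monotonicity of $tp^v$ is exactly equivalent to the paper's monotone quantity $r^2p^u/p^v$.) However, there are two genuine gaps. First, you never address the centre of symmetry. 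Geodesics emanating from $\mathcal{B}$ with $p^u(1)>p^v(1)$ move towards $r=0$; radial ingoing ones actually reach it, where the double null coordinates degenerate and $(p^u,p^v)$ is discontinuous (the paper notes $\lim_{s\downarrow s_*}p^v=\lim_{s\uparrow s_*}p^u$), and non-radial ones must be shown to turn around before reaching it (the paper uses $\int_0 r^{-3}\,dr=\infty$ together with $\tfrac{d}{ds}(t(p^v-p^u))=L^2/(tr^3)$). Roughly half the paper's proof is devoted to this; without it, the existence of $s_0$, the validity of your identities on $[1,s_0]$, and the appeal to ``continuity of the geodesic flow'' on $[1,s_0]$ are not justified.

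Second, the step ``$p^v(s)\leq E/t(s)$, which gives the two-sided bound with $c,C$ chosen appropriately'' hides the real difficulty: you need $E\leq C\,t(s_0)p^v(s_0)$, i.e.\ $p^u(s_0)\lesssim p^v(s_0)$, with $C$ uniform over $\mathcal{B}$. A priori a geodesic could still be predominantly ingoing at $\{v=v_0\}$ (i.e.\ $tp^u\geq E/2$ there), in which case $t(s_0)p^v(s_0)\ll E$ while $tp^v$ subsequently grows towards $E$, and the claimed upper bound fails with a uniform constant. The same missing comparison is what makes your $u$-integral small: bounding the integrand by $L^2/(4m^2r^2)$ with $m=t(s_0)p^v(s_0)$ and using $L^2=4r(s_0)^2 t(s_0)^2p^u(s_0)p^v(s_0)$, the integral evaluates to roughly $r(s_0)^2 p^u(s_0)/(v_0\,p^v(s_0))$, which is only small if $p^u(s_0)/p^v(s_0)\lesssim v_0^{-1}$. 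The paper closes this by observing that $r^2p^u/p^v$ is non-increasing, so $p^u(s_0)/p^v(s_0)\leq r(s_*)^2/r(s_0)^2$ with $r(s_*)$ bounded by the (compact) initial support and $r(s_0)$ large for $v_0$ large; this in turn requires first showing $r$ is eventually non-decreasing and unbounded. You have all the ingredients to reproduce this (it is equivalent to your monotonicity of $tp^v$ via the product identity above), but the deduction must actually be made; the uniformity issue lives here, not in the bootstrap integrals, which are otherwise fine.
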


\begin{proof}
	Provided $v_0$ is suitably large, the existence of $s_0 \in [1,\infty)$ such that $v(\gamma(s_0)) = v_0$ follows from the compactness of $\mathcal{B}$.
	
	Consider now the bounds \eqref{eq:FLRWcomponents}.  For simplicity, suppose that $r(\gamma(1)) \neq 0$ (otherwise replace $1$ in the following with $1+\epsilon$ for some $\epsilon>0$).
	By the conservation of angular momentum \eqref{eq:consofam}, one has the conservation law
	\begin{equation} \label{eq:FLRWconsam}
		L(s)^2
		:=
		t(s)^2 r(s)^4 \gamma_{AB} p^A(s) p^B(s)
		=
		t(1)^2 r(1)^4 \gamma_{AB} p^A(1) p^B(1)
		\qquad
		\text{for all }
		s \geq 1,
	\end{equation}
	from which the first of \eqref{eq:FLRWcomponents} trivially follows.

	For the remaining inequalities of \eqref{eq:FLRWcomponents}, suppose first that $\gamma$ is radial, so that the conserved quantity $L(s)$ vanishes.  It follows from the mass shell relation \eqref{eq:massshellFLRW} that either $p^u(1) = 0$ or $p^v(1) = 0$.  The geodesic equations \eqref{eq:FLRWgeo1}--\eqref{eq:FLRWgeo2} take the form
	\[
		\frac{d}{ds} \left( t p^u \right) = 0,
		\qquad
		\frac{d}{ds} \left( t p^v \right) = 0,
	\]
	and so, as long as the geodesic remains away from the centre where the coordinate system breaks down, one has $t(s) p^u(s) = p^u(1)$ and $t(s) p^v(s) = p^v(1)$.  If $p^u(1) = 0$ then, by \eqref{eq:FLRWtdotrdot}, $\dot{r}(s) >0$ for all $s\geq 1$ and so the solution remains away from the centre.  The bounds \eqref{eq:FLRWcomponents} then trivially hold.  If $p^v(1)=0$ then equations \eqref{eq:FLRWtdotrdot} imply
	\[
		t(s) = \Big( \frac{3}{2} (s-1)p^u(1) + 1 \Big)^{\frac{2}{3}},
		\qquad
		r(s) = r(1) - \int_1^s p^u(1) \Big( \frac{3}{2} (s'-1)p^u(1) + 1 \Big)^{-\frac{2}{3}} ds'.
	\]
	It follows that the there is a time $s_* \geq 1$ at which the geodesic hits the centre, i.\@e.\@ a time at which $r(s_*) = 0$.  The geodesic then becomes outgoing, with
	\[
		p^u(s) = 0 \text{ for } s> s_*, \qquad \text{and} \qquad \lim_{s \downarrow s_*} p^v(s) = \lim_{s \uparrow s_*} p^u(s) = \frac{p^u(1)}{t(s_*)}.
	\]
	The bounds \eqref{eq:FLRWcomponents} then again trivially hold, as before.
	
	Suppose now that $\gamma$ is not radial, so that $L\neq 0$, where $L$ is defined by \eqref{eq:FLRWconsam}.  Recall (see \eqref{eq:FLRWpr}) that
	\[
		\frac{d}{ds} \big( t(p^v-p^u) \big) = \frac{L^2}{t r^3} \geq 0.
	\]
	If $p^v(1) \geq p^u(1)$ then it follows that $p^v(s) \geq p^u(s)$ for all $s \geq 1$ and, by the latter of \eqref{eq:FLRWtdotrdot}, $r(s)$ is non-decreasing in $s$ and the geodesic remains away from the centre.  If $p^v(1) < p^u(1)$ then, since $\int_0^{\epsilon} r^{-3} dr = \infty$ for all $\epsilon >0$, it follows that there exists a time $s_* \geq 1$ such that $p^v(s_*) = p^u(s_*)$, and moreover $r(s) >0$ for all $1 \leq s \leq s_*$.  As above $r(s)$ is then non-decreasing for $s \geq s_*$ and so the geodesic again remains away from the centre.  Using now the mass shell relation \eqref{eq:massshellFLRW} and the equations \eqref{eq:FLRWgeo1}--\eqref{eq:FLRWgeo2},
	\[
		\frac{d}{ds} \left( \frac{r^2 p^u}{p^v} \right)
		=
		-4r \frac{(p^u)^2}{p^v}
		\leq
		0,
	\]
	and so (defining $s_*=1$ if $p^v(1) \geq p^u(1)$ and integrating from $s=s_*$),
	\begin{equation} \label{eq:FLRWr2pupv}
		r(s)^2 p^u(s) \leq r(s_*)^2 p^v(s), \qquad \text{for all } s \geq s_*.
	\end{equation}
	
	Consider the latter of \eqref{eq:FLRWcomponents}.  It follows from \eqref{eq:FLRWgeo2} and \eqref{eq:FLRWtdotrdot} that
	\begin{align}
		\frac{d}{ds} \left( t p^v \right)
		=
		\frac{t}{2r} r^2 \gamma_{AB} p^A p^B
		=
		\frac{2}{t r^3} \frac{t^2 r^4 \gamma_{AB} p^A p^B}{4}
		=
		\frac{2}{t(s)r(s)^3} t(s_0)^2r(s_0)^2 p^u(s_0) p^v(s_0)
		,
		\label{eq:FLRWpv}
	\end{align}
	The lower bound follows from the sign of the right hand side of \eqref{eq:FLRWpv}.  For the upper bound, \eqref{eq:FLRWr2pupv} guarantees that, if $v_0$ is sufficiently large (and hence $s_0 \geq s_*$ is sufficiently large), then $p^v(s_0) \geq 2 p^u(s_0)$.  Equation \eqref{eq:FLRWpr} implies that
	\begin{equation} \label{eq:FLRWpr2}
		t(s) \big( p^v(s) - p^u(s) \big) \geq t(s_0) \big( p^v(s_0) - p^u(s_0) \big)
		\qquad
		\text{for all } s \geq s_0,
	\end{equation}
	and so equation \eqref{eq:FLRWpv} implies that
	\begin{multline*}
		t(s) p^v(s)
		-
		t(s_0) p^v(s_0)
		\leq
		\int_{s_0}^s
		\frac{2}{t(s') r(s')^3}
		ds' t(s_0)^2r(s_0)^2 p^u(s_0) p^v(s_0)
		\\
		\leq
		\int_{r(s_0)}^{r(s)}
		\frac{2}{r^3}
		dr \frac{t(s_0)r(s_0)^2 p^u(s_0) p^v(s_0)}{\big( p^v(s_0) - p^u(s_0) \big)}
		\leq
		4 t(s_0) p^v(s_0)
		,
	\end{multline*}
	and the upper bound follows.  The estimate for $p^u(s)$ in \eqref{eq:FLRWcomponents} finally follows from returning to \eqref{eq:FLRWr2pupv}.
	
	Consider now \eqref{eq:FLRWubound}.  The existence of $U_0$ follows trivially from the fact that
	\[
		\dot{u}(s) = p^u(s) \geq 0.
	\]
	The existence of $U_1$ follows from the fact that,
	\[
		u(s) - u(s_0)
		=
		\int_{s_0}^s p^u(s')ds'
		\leq
		\int_{r(s_0)}^{r(s)}
		\frac{C}{t r^2} \frac{p^v(s_0)}{p^v - p^u} dr
		\leq
		\frac{2C}{t(s_0) r(s_0)},
	\]
	by \eqref{eq:FLRWcomponents} and \eqref{eq:FLRWpr2}.
\end{proof}

The proof of Theorem \ref{thm:fixedbackground} can now be given.

\begin{proof}[Proof of Theorem \ref{thm:fixedbackground}]
	Consider Proposition \ref{prop:geodesics} with $\mathcal{B} = \supp(f_1)$.
	The inclusion \eqref{eq:FLRWsuppf} follows immediately from \eqref{eq:FLRWubound}.  Let $v_0$ be sufficiently large so that $R_{\circ}(U_1,v_0)>0$.  The inclusion \eqref{eq:FLRWsuppf} in particular implies that,
	\[
		c r \leq t^{\frac{1}{2}} \leq C r,
		\qquad
		\text{in } \supp(f) \cap \{v \geq v_0\},
	\]
	and so \eqref{eq:FLRWcomponents} imply that
	\[
		\left( \gamma_{AB} p^A p^B \right)^{\frac{1}{2}} \leq \frac{L_0}{t^2},
		\qquad
		0 \leq p^u \leq \frac{Cp^v}{t},
		\qquad
		0 \leq p^v \leq \frac{C}{t},
		\qquad
		c t^2
		\leq 
		R_{\circ}^2
		\leq
		C t^2
		\qquad
		\text{in } \supp(f) \cap \{v \geq v_0\}.
	\]
	Hence, in view of the expression \eqref{eq:emtensordoublenull1},
	\[
		T_{uu}
		=
		\frac{\pi}{2} \frac{\Omega_{\circ}^4}{R_{\circ}^2}
		\int_0^{L_0} \int_0^{Ct^{-1}}  f \, p^v \, dp^v \, L \, dL
		\lesssim
		\frac{\Vert f_1 \Vert_{L^{\infty}}}{t^2}.
	\]
	Similarly, in view of \eqref{eq:emtensordoublenull2} and \eqref{eq:emtensordoublenull3},
	\[
		T_{uv}
		=
		\frac{\pi}{2} \frac{\Omega_{\circ}^4}{R_{\circ}^2}
		\int_0^{L_0} \int_0^{Ct^{-1}}  f \, p^u \, dp^v \, L \, dL
		\lesssim
		\frac{\Vert f_1 \Vert_{L^{\infty}}}{t^3},
		\qquad
		T_{vv}
		=
		\frac{\pi}{2} \frac{\Omega_{\circ}^4}{R_{\circ}^2}
		\int_0^{L_0} \int_0^{Ct^{-1}}  f \, \frac{(p^u)^2}{p^v} \, dp^v \, L \, dL
		\lesssim
		\frac{\Vert f_1 \Vert_{L^{\infty}}}{t^4}.
	\]
\end{proof}

\begin{remark}[Massless Vlasov equation on Minkowski space] \label{rmk:compareMinkowski}
	Consider Minkowski space $(\mathbb{R}^{3+1},m)$, where $m = -dt^2 + (dx^1)^2 + (dx^2)^2 + (dx^3)^2$.  In standard double null coordinates $u= t - r$, $v = t + r$, $r = \sqrt{(x^1)^2+(x^2)^2+(x^3)^2}$, with respect to which the metric takes the form
	\[
		m = - du dv + r^2 \gamma,
	\]
	the components of the energy momentum tensor \eqref{eq:EV4} for solutions of the massless Vlasov equation arising from compactly supported initial data decay with the rates
	\[
		T_{uu} \lesssim \frac{\Vert f_1 \Vert_{L^{\infty}}}{t^2},
		\qquad
		T_{uv} \lesssim \frac{\Vert f_1 \Vert_{L^{\infty}}}{t^4},
		\qquad
		T_{vv} \lesssim \frac{\Vert f_1 \Vert_{L^{\infty}}}{t^6},
	\]
	for $t\geq 1$ \cite{Tay17}. 
	Compare with the rates \eqref{eq:EMtensoronFLRW} which, with respect to the double null frame $e_3 = t^{-\frac{1}{2}} \partial_u$, $e_4 = t^{-\frac{1}{2}} \partial_v$ normalised so that $g_{\circ}(e_3,e_4) = -2$, take the form
	\[
		T_{33} \lesssim \frac{\Vert f_1 \Vert_{L^{\infty}}}{t^3},
		\qquad
		T_{34} \lesssim \frac{\Vert f_1 \Vert_{L^{\infty}}}{t^4},
		\qquad
		T_{44} \lesssim \frac{\Vert f_1 \Vert_{L^{\infty}}}{t^5}.
	\]
\end{remark}

\begin{remark}[Massless Vlasov equation on FLRW with $\mathbb{T}^3$ topology] \label{rmk:T3rates}
	The FLRW metric $g_{\circ}$, along with the function $f_{\circ}$, defined by \eqref{eq:FLRW1}--\eqref{eq:FLRW2}, also define a solution of the Einstein--massless Vlasov system on the manifold $(0,\infty) \times \mathbb{T}^3$.  The components of the energy momentum tensor \eqref{eq:EV4} for solutions of the massless Vlasov equation arising from compactly supported initial data on such a background decay with the rates
	\[
		T_{tt} \lesssim \frac{\Vert f\vert_{\Sigma_1} \Vert}{t^2}, 
		\qquad
		\sum_{i=1}^3 T_{it} \lesssim \frac{\Vert f\vert_{\Sigma_1} \Vert}{t^{\frac{3}{2}}},
		\qquad
		\sum_{i,j=1}^3 T_{ij} \lesssim \frac{\Vert f\vert_{\Sigma_1} \Vert}{t},
	\]
	for all $t$, for a suitable norm $\Vert \cdot \Vert$.  Compare again with the rates \eqref{eq:EMtensoronFLRW}.
\end{remark}

\section{The main theorem}
\label{section:maintheorem}

Theorem \ref{thm:main} can be more precisely stated as follows.  Recall that initial data for the Einstein--massless Vlasov system consists of $(\Sigma, \overline{g},K,f_1)$ (see Section \ref{subsec:Cauchyproblem}).  Let $\Vert \cdot \Vert$ be a norm in which local well posedness and Cauchy stability holds for the Einstein--massless Vlasov system \eqref{eq:EV1}--\eqref{eq:EV4} in spherical symmetry.  The norm on $(\Sigma_1,g_{\circ}\vert_{\Sigma_1})$,
\[
	\Vert (\overline{g},K,f_1) \Vert
	=
	\Vert \overline{g} \Vert_{C^{k+1}}
	+
	\Vert K \Vert_{C^k}
	+
	\Vert f_1 \Vert_{C^{k}},
\]
suffices, for example, for $k$ suitably large.

\begin{theorem}[Future stability of FLRW] \label{thm:main2}
	Let $\mu : [0,\infty) \to [0,\infty)$ be a smooth function, decaying suitably so that $\int_{\mathbb{R}^3} \vert p \vert \mu(\vert p \vert^2) dp$ is finite, such that $\mu \not\equiv 0$.  Let $g_{\circ}$ and $f_{\circ}$ be defined by \eqref{eq:FLRW1}, \eqref{eq:FLRW2}.  Let $(\Sigma_1,\overline{g},K,f_1)$ be a spherically symmetric initial data set for the Einstein--massless Vlasov system on $\Sigma_1 = \{t=1\} \subset \mathcal{M}_{\circ}$ such that $f_1 - f_{\circ}\vert_{\Sigma_1}$ is compactly supported on $P\vert_{\Sigma_1}$ and $\overline{g} - g_{\circ}\vert_{\Sigma_1}$ and $K - \frac{1}{2} \partial_t g_{\circ}\vert_{\Sigma_1}$ are compactly supported on $\Sigma_1$.  There exists $\varepsilon_0>0$ such that, if
	\[
		\Big\Vert \Big( \overline{g} - g_{\circ}\vert_{\Sigma_1}, K - \frac{1}{2} \partial_t g_{\circ}\vert_{\Sigma_1},f_1 - f_{\circ}\vert_{\Sigma_1} \Big) \Big\Vert \leq \varepsilon_0,
	\]
	then the resulting maximal developement of Theorem \ref{thm:Cauchyproblem} is future geodesically complete and there exists $U_0 <0 < U_1$ such that, in an appropriately normalised double null gauge, the associated $(\mathcal{Q},\Omega^2,R,f)$ satisfies the estimates
	\begin{equation} \label{eq:mainthm1}
		\Big\vert \Omega^{-4} R^{-2} T_{uu} - \Omega^{-4}_{\circ} R^{-2}_{\circ} T^{\circ}_{uu} \Big\vert
		\lesssim
		\frac{\varepsilon_0}{t^6},
		\quad
		\Big\vert \Omega^{-2} R^{-4} T_{uv} - \Omega^{-2}_{\circ} R^{-4}_{\circ} T^{\circ}_{uv} \Big\vert
		\lesssim
		\frac{\varepsilon_0}{t^8},
		\quad
		\Big\vert R^{-6} T_{vv} - R^{-6}_{\circ} T^{\circ}_{vv} \Big\vert
		\lesssim
		\frac{\varepsilon_0}{t^{10}},
	\end{equation}
	\begin{equation} \label{eq:mainthm2}
		\vert \Omega^2 - 4t \vert \lesssim \varepsilon_0 t,
		\qquad
		\vert R - t^{\frac{1}{2}} r \vert \lesssim \varepsilon_0,
		\qquad
		\vert \lambda - v \vert \lesssim \varepsilon_0,
		\qquad
		\vert \nu + u \vert \lesssim \varepsilon_0,
	\end{equation}
	in the region $\{U_0 <u < U_1\}$, where $t = \frac{1}{4}(v+u)^2$ and $r=v-u$.  Moreover,
	\[
		\pi(\supp(f - f_{\circ})) \subset \{ U_0 \leq u \leq U_1 \},
	\]
	where $\pi \colon P \to \mathcal{M}$ is the natural projection, and the solution is isometric to the spatially homogeneous FLRW solution \eqref{eq:FLRW1}--\eqref{eq:FLRW2} in the region $\{u \geq U_1\}$.
\end{theorem}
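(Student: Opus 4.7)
The plan is a bootstrap/continuity argument in a spherically symmetric double null gauge of the type set up in Section \ref{subsec:doublenull}, closely modelled on the proof of decay for the linear Vlasov equation in Theorem \ref{thm:fixedbackground} but now coupled to the Einstein equations \eqref{eq:Einsteinss1}--\eqref{eq:Einsteinss4}. First I would fix the residual gauge freedom of Remark \ref{rmk:residualgauge} by requiring $(u,v)$ to coincide with the FLRW double null coordinates of Section \ref{subsec:FLRWdoublenull} on the initial hypersurface $\Sigma_1$ and on an outgoing null cone bounding the region of interest. Standard Cauchy stability applied to Theorem \ref{thm:Cauchyproblem} then gives a short-time solution in which $\Omega^2, R, \lambda, \nu$ are $\varepsilon_0$-close to their FLRW values and $f$ is $\varepsilon_0$-close to $f_{\circ}$, providing the starting segment of the continuity argument.

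On a bootstrap region of the form $\{U_0 \leq u \leq U_1,\, v \leq V\}$, with $U_0<0<U_1$ chosen so that the initial metric perturbation and $\supp(f_1 - f_{\circ}\vert_{\Sigma_1})$ lie inside $\{U_0 \leq u \leq U_1\}$, I would postulate \eqref{eq:mainthm1} and \eqref{eq:mainthm2} with $\varepsilon_0$ replaced by a larger constant $A \varepsilon_0$, and then improve each inequality to $A \varepsilon_0 / 2$. The key step is to rerun the FLRW null-geodesic analysis of Proposition \ref{prop:geodesics} on the perturbed background: under the bootstrap metric estimates the geodesic equations differ from \eqref{eq:FLRWgeo1}--\eqref{eq:FLRWgeo2} only by $O(A\varepsilon_0)$ terms, and the angular momentum conservation \eqref{eq:consofam} holds exactly. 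Repeating the arguments leading to \eqref{eq:FLRWcomponents}---in particular the monotonicity of $t(p^v - p^u)$ from \eqref{eq:FLRWpr} and the identity for $\tfrac{d}{ds}(r^2 p^u/p^v)$---yields the same hierarchy
\[
	\sqrt{\gamma_{AB} p^A p^B} \lesssim \frac{L_0}{tr^2}, \qquad p^u \lesssim \frac{p^v(s_0)}{tr^2}, \qquad p^v \sim \frac{p^v(s_0)}{t},
\]
along every geodesic meeting $\supp(f)$, with slightly worse constants. In particular the $u$-component stays bounded along the flow, which locates $\supp(f - f_{\circ})$ in the strip $\{U_0 \leq u \leq U_1\}$.

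Feeding these momentum bounds into \eqref{eq:emtensordoublenull1}--\eqref{eq:emtensordoublenull3} and changing variables $p^v \mapsto tp^v$ on the mass shell (a Liouville-type step) produces the $T$ bounds \eqref{eq:mainthm1}, closing that part of the bootstrap. These are then inserted into \eqref{eq:Einsteinss1}, \eqref{eq:Einsteinss3}, \eqref{eq:Einsteinss4}, \eqref{eq:Einsteinss2var}, and integration along the null directions---with characteristic data prescribed on $\Sigma_1$ and matched to FLRW on $\{u = U_1\}$ via the exterior argument below---closes \eqref{eq:mainthm2}. To the future of $\{u = U_1\}$, finite domain of dependence forces $f \equiv f_{\circ}$; the Birkhoff-type argument alluded to in the introduction, combined with continuity of $(\Omega^2, R)$ across $\{u = U_1\}$, then identifies the solution with FLRW isometrically. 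Future geodesic completeness at non-central points follows from Theorem \ref{thm:noncentext}, using the lower and upper bounds on $R$ and the finiteness of $\iint \Omega^2 du\, dv$ provided by \eqref{eq:mainthm2}; completeness at the centre reduces to that of FLRW, since the central geodesic $\Gamma \cap \{v > U_1\}$ sits in the isometric FLRW exterior.

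The principal obstacle is the bootstrap step itself. Because the FLRW background undergoes only decelerated expansion, there is no redshift to assist decay, and closing the hierarchical rates in \eqref{eq:mainthm1}---especially the sharpest one for $T_{vv}$---depends delicately on propagating the near-optimal pointwise bound $p^u \lesssim p^v(s_0)/(tr^2)$ along geodesics of the perturbed geometry. Any loss here, e.\@g.\@ in the monotonicity underlying \eqref{eq:FLRWr2pupv}, would degrade $T_{vv}$ by a power of $t$ and break the hierarchy. The specific gauge choice---matching FLRW on both characteristic hypersurfaces bounding the bootstrap strip---is what permits integrating the metric equations without accumulating additional logarithmic losses.
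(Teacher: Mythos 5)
Your proposal follows the same overall strategy as the paper: Cauchy stability to reduce to a characteristic problem, a bootstrap in a spherically symmetric double null gauge, a null geodesic analysis to locate $\supp(f-f_\circ)$ and bound the momentum components, the resulting hierarchy of decay rates for the energy momentum tensor, metric estimates by integration, a Birkhoff-type argument in $\{u \geq U_1\}$, and Theorem~\ref{thm:noncentext} to continue the solution. So the architecture is right. But there is a genuine gap at the centre of the argument, in the step you rightly identify as the delicate one.

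You propose to recover the momentum bounds on the perturbed background by ``repeating the arguments leading to \eqref{eq:FLRWcomponents} --- in particular the monotonicity of $t(p^v-p^u)$ from \eqref{eq:FLRWpr} and the identity for $\tfrac{d}{ds}(r^2 p^u/p^v)$.'' Both of those are exact FLRW identities, relying on cancellations in the FLRW Christoffel symbols, and neither survives as a sign condition once perturbed: each right-hand side acquires terms of size $\varepsilon$ times geodesic data, with no sign. Integrated over the geodesic flow, an $O(\varepsilon)$ pointwise error in the connection is not automatically harmless --- the integral can diverge --- so ``the same hierarchy with slightly worse constants'' does not follow and, as you note yourself in your final paragraph, a loss here degrades $T_{vv}$ and breaks the hierarchy. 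The paper avoids these identities on the perturbed background. It renormalises the $p^v$ geodesic equation as \eqref{eq:geodesicpvrenom}, introduces an integrating factor built from a function $h$ encoding the deviation from FLRW, so that $tp^v \exp(\int h\, p^v)$ is nondecreasing, and runs a continuity argument on a set $\mathcal{E}$ to obtain the lower bound on $tp^v$. The convergence of $\int |h|\, p^v\, ds$ along geodesics is exactly why the bootstrap hypotheses \eqref{eq:ba2} are \emph{hierarchical} in the null derivatives: one needs $\vert \partial_v\log\Omega^2 - t^{-1/2}\vert < \varepsilon/t$, with the extra $t^{-1}$ making $h$ integrable, whereas $\partial_u\log\Omega^2$ only needs the flat $\varepsilon$ bound. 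Your bootstrap ansatz (``\eqref{eq:mainthm1} and \eqref{eq:mainthm2} with $A\varepsilon_0$'') has no such sharpened $\partial_v\log\Omega^2$ estimate, so as stated the error integral would not close. Finally, the $p^u$ bound in the perturbed setting comes directly from the mass shell relation \eqref{eq:massshell} together with conservation of $L$ and the $tp^v$ bound --- not from a perturbed version of the $r^2 p^u/p^v$ monotonicity.

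Two smaller points. Your metric estimates should be integrated from $\{u=U_0\}$, where the solution is FLRW by finite speed of propagation, in the direction of increasing $u$ and $v$; integrating from $\{u=U_1\}$ is circular, since the solution there is identified with FLRW only \emph{after} a gauge change (and the paper performs precisely this gauge change as part of the Birkhoff step, using the Raychaudhuri equation \eqref{eq:Einsteinss4} on $\{u=U_1\}$). And what forces $f\equiv f_\circ$ in $\{u\geq U_1\}$ is not domain of dependence but the output of the geodesic analysis itself, which shows $\supp(f-f_\circ)\subset\{u\leq U_1\}$.
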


Note that, since the estimates \eqref{eq:mainthm2} imply that $\Omega^4 R^2 \sim t^4$, $\Omega^2 R^4 \sim t^5$ and $R^6 \sim t^6$ in $\supp(f-f_{\circ}) \cap \{v \geq v_0\}$, the rates \eqref{eq:mainthm1} should be viewed as exactly the rates of Theorem \ref{thm:fixedbackground}.

The remainder of the paper is concerned with the proof of Theorem \ref{thm:main2}.

\section{The proof of the main theorem}
\label{section:proof}

This section concerns the proof of Theorem \ref{thm:main2}.  In Section \ref{section:Cauchystab} the proof is reduced so that the main region of consideration arises from an appropriate characteristic initial value problem.  In Section \ref{subsec:renorm} a more convenient renormalisation of the equations of Section \ref{subsec:ssEmVdng} are given.  Section \ref{subsec:bootstrap} concerns Theorem \ref{thm:bootstrap}, which forms the main content of the proof of Theorem \ref{thm:main2}.  In Section \ref{subsec:conclusion} the proof of Theorem \ref{thm:main2} is completed.

\subsection{Cauchy stability and the domain of dependence property}
\label{section:Cauchystab}

Recall the setting of Theorem \ref{thm:main2}.  Let $(\mathcal{M},g)$ denote the unique maximal development of $(\Sigma_1,\overline{g},K,f_1)$, and let $(\mathcal{Q},\Omega^2,R,f)$ denote the associated maximal solution of the reduced system \eqref{eq:Einsteinss1}--\eqref{eq:Vlasovss} (see Section \ref{subsec:ssEmVdng}), in which the residual gauge freedom, discussed in Remark \ref{rmk:residualgauge}, is normalised so that $\lambda \vert_{\{t=1\}} = \lambda_{\circ} \vert_{\{t=1\}}$ and $\nu\vert_{\{t=1\} \cup \Gamma} = \nu_{\circ} \vert_{\{t=1\} \cup \Gamma}$.

Consider some $v_0> U_1$, where $U_1 > 0$ is to be determined later.  Given $\delta_0 >0$, $L_0 \geq 0$, define the set
\[
	\mathcal{U}_{\delta_0,L_0,v_0}
	=
	\{ (x,p)\in P \mid v(x) = v_0, \ 2 - v_0 \leq u(x) \leq v_0-\delta_0, \ 0 \leq p^u \leq 2 p^v, \ L(x,p) \leq L_0 \}
	\subset P,
\]
so that, in particular, $r \geq \delta_0$ and $t \geq 1$ in $\mathcal{U}_{\delta_0,L_0,v_0}$.

By the domain of dependence property, in the region $\{u \leq U_0\}$, for some appropriate $U_0<0$, the solution coincides with the FLRW solution \eqref{eq:FLRW1}--\eqref{eq:FLRW2}.  In particular,
\begin{equation} \label{eq:bcs}
	\Omega^2 = 4t,
	\qquad
	\partial_v \log \Omega^2 = t^{-\frac{1}{2}},
	\qquad
	R = t^{\frac{1}{2}} r,
	\qquad
	\lambda = v,
	\qquad
	\nu = - u,
	\qquad
	\text{on }
	\{u=U_0\}.
\end{equation}

Let $t_0 = t(U_1,v_0) = \frac{1}{4}(v_0+U_1)^2$.  Let $L_0>0$ be such that
\[
	R^4\gamma_{AB} p^A p^B \leq (L_0)^2
	\qquad
	\text{in}
	\qquad
	\supp(f_1 - f_{\circ}\vert_{\Sigma_1}),
\]
and let $\delta_0>0$ be such that $r(U_1,v_0) = v_0 -U_1 > \delta_0$.  By Cauchy stability (for both the Einstein--massless Vlasov system \eqref{eq:EV1}--\eqref{eq:EV4} and the geodesic equations) the solution exists up to $\{t = t_0\}$ and, by Proposition \ref{prop:geodesics}, for any future-maximal null geodesic $\gamma\colon [1,S) \to \mathcal{M}$ such that $(\gamma(1),\dot{\gamma}(1)) \in \supp(f_1)$, there exists $s_0 \in [1,\infty)$ such that
\[
	(\gamma(s_0),\dot{\gamma}(s_0)) \in \mathcal{U}_{\delta_0,L_0,v_0},
\]
if $v_0$ is sufficiently large and $\varepsilon_0>0$ is sufficiently small.  Moreover
\begin{equation} \label{eq:v0}
	\vert \Omega^2 - 4t \vert \lesssim \varepsilon_0,
	\quad
	\vert \partial_u \log \Omega^2 - t^{-\frac{1}{2}} \vert \lesssim \varepsilon_0,
	\quad
	\vert R - t^{\frac{1}{2}} r \vert \lesssim \varepsilon_0,
	\quad
	\vert \lambda - v \vert \lesssim \varepsilon_0,
	\quad
	\vert \nu + u \vert \lesssim \varepsilon_0,
	\quad
	\text{on }
	\{v=v_0\}.
\end{equation}

The remainder of the proof will then be concerned with the region $\{ v\geq v_0 \} \cap \{ u \geq U_0\}$.

\subsection{The renormalised equations}
\label{subsec:renorm}

Rather than considering the equations \eqref{eq:Einsteinss1}--\eqref{eq:Einsteinss4} directly, in the proof of Theorem \ref{thm:main2} it is more convenient to consider the following renormalisation of the equations with their respective FLRW quantities,
\begin{align}
	\partial_u \partial_v (R^2 - R_{\circ}^2)
	&
	=
	-\frac{\Omega^2 - \Omega^2_{\circ}}{2} + (R^2 - R_{\circ}^2) T_{uv} + R_{\circ}^2 (T_{uv} - T^{\circ}_{uv}),
	\label{eq:renom1}
	\\
	\partial_u \partial_v \log \frac{\Omega^2}{\Omega^2_{\circ}}
	&
	=
	\frac{\Omega^2}{2 R^2} - \frac{\Omega^2_{\circ}}{2 R^2_{\circ}}
	+
	\frac{2}{R^2} \nu \lambda - \frac{2}{R^2_{\circ}} \nu_{\circ} \lambda_{\circ}
	-
	2(T_{uv} - T^{\circ}_{uv}),
	\label{eq:renom2}
	\\
	\partial_u(\Omega^{-2} \nu - \Omega^{-2}_{\circ} \nu_{\circ})
	&
	=
	\frac{1}{2} R_{\circ} \Omega^{-2}_{\circ} T^{\circ}_{uu} - \frac{1}{2} R \Omega^{-2} T_{uu},
	\label{eq:renom3}
	\\
	\partial_v(\Omega^{-2} \lambda - \Omega^{-2}_{\circ} \lambda_{\circ})
	&
	=
	\frac{1}{2} R_{\circ} \Omega^{-2}_{\circ} T^{\circ}_{vv} - \frac{1}{2} R \Omega^{-2} T_{vv},
	\label{eq:renom4}
\end{align}
where the solution is compared with the FLRW solution using the double null gauge of Section \ref{subsec:FLRWdoublenull}, as discussed at the end of Section \ref{subsec:FLRWdoublenull}.

For any null geodesic $\gamma\colon [1,S) \to \mathcal{M}$, write the tangent vector to $\gamma$ as
\[
	\dot{\gamma}(s) = p^u(s) \partial_u + p^v(s) \partial_v + p^A (s) \partial_{\theta^A}.
\]
The geodesic equations for $p^u(s)$ and $p^v(s)$ take the form
\begin{align}
	\dot{p}^u(s)
	+
	\partial_u \log \Omega^2 \big( p^u(s)\big)^2
	+
	\frac{2\lambda R}{\Omega^2} \gamma_{AB} p^A p^B
	&
	=
	0,
	\label{eq:geodesicpu}
	\\
	\dot{p}^v(s)
	+
	\partial_v \log \Omega^2 \big( p^v(s)\big)^2
	+
	\frac{2\nu R}{\Omega^2} \gamma_{AB} p^A p^B
	&
	=
	0.
	\label{eq:geodesicpv}
\end{align}
The following renormalisation of \eqref{eq:geodesicpv},
\begin{align} \label{eq:geodesicpvrenom}
	\frac{d}{ds} (tp^v)
	=
	t (t^{-\frac{1}{2}} - \partial_v \log \Omega^2) \big( p^v\big)^2
	+
	\Big( (R - R_{\circ}) - 2t^{\frac{1}{2}}(\nu - \nu_{\circ}) \Big) \frac{t^{\frac{1}{2}} R}{\Omega^2} \gamma_{AB} p^A p^B
	+
	\frac{2t^{\frac{3}{2}} R}{\Omega^2} \gamma_{AB} p^A p^B
	,
\end{align}
will also be used.  Equation \eqref{eq:geodesicpvrenom} is obtained from \eqref{eq:geodesicpv} using the mass shell relation \eqref{eq:massshell} and the fact that $\dot{t} = t^{\frac{1}{2}} (p^v + p^u)$.

\subsection{The bootstrap theorem}
\label{subsec:bootstrap}

The main content of the proof of Theorem \ref{thm:main2} is contained in the following bootstrap theorem.

Given $T >t_0$ and $U_1 >0$, define the region
\begin{equation} \label{eq:QUT}
	\mathcal{Q}_{U_1,T} = \{ (u,v) \in \mathcal{Q} \mid U_0 \leq u \leq U_1, v \geq v_0, \text{and } t_0 \leq t(u,v) < T\}.
\end{equation}

\begin{theorem}[Bootstrap theorem] \label{thm:bootstrap}
	Suppose that $T>t_0$ is such that the maximal development exists up to time $T$ --- in the sense that $(u(t,r),v(t,r)) \in \mathcal{Q}$ for all $t_0 \leq t <T$, $r\geq 0$ --- the difference $f-f_{\circ}$ is supported in the region
	\begin{equation} \label{eq:supportproperty}
		\supp(\pi(f- f_{\circ})) \subset \{U_0 \leq u \leq U_1\},
	\end{equation}
	and, for some fixed $\varepsilon >0$ and $U_1 >0$, the solution moreover satisfies, for all $(u,v) \in \mathcal{Q}_{U_1,T}$,
	\begin{equation} \label{eq:ba1}
		\Big\vert \Omega^{-4} R^{-2} T_{uu} - \Omega^{-4}_{\circ} R^{-2}_{\circ} T^{\circ}_{uu} \Big\vert
		<
		\frac{\varepsilon}{t^6},
		\quad
		\Big\vert \Omega^{-2} R^{-4} T_{uv} - \Omega^{-2}_{\circ} R^{-4}_{\circ} T^{\circ}_{uv} \Big\vert
		<
		\frac{\varepsilon}{t^8},
		\quad
		\Big\vert R^{-6} T_{vv} - R^{-6}_{\circ} T^{\circ}_{vv} \Big\vert
		<
		\frac{\varepsilon}{t^{10}} ,
	\end{equation}
	\begin{equation} \label{eq:ba2}
		\vert \Omega^2 - 4t \vert < \varepsilon t,
		\qquad
		\Big\vert \partial_v \log \Omega^2 - \frac{1}{t^{\frac{1}{2}}} \Big\vert < \frac{\varepsilon}{t},
		\qquad
		\Big\vert \partial_u \log \Omega^2 - \frac{1}{t^{\frac{1}{2}}} \Big\vert < \varepsilon,
	\end{equation}
	\begin{equation} \label{eq:ba3}
		\vert R - t^{\frac{1}{2}} r \vert < \varepsilon,
		\qquad
		\vert R^2 - t r^2 \vert < \varepsilon t,
		\qquad
		\vert \lambda - v \vert < \varepsilon,
		\qquad
		\vert \nu + u \vert < \varepsilon.
	\end{equation}
	Then, if $U_1$ is suitably large and $\varepsilon$ is suitably small, there exists a constant $C>0$ (independent of $\varepsilon$ and $U_1$) such that the inequalities \eqref{eq:ba1}--\eqref{eq:ba3} hold in $\mathcal{Q}_{U_1,T}$ with $\varepsilon$ replaced by $C\varepsilon_0$ and \eqref{eq:supportproperty} holds with $U_1$ replaced by $U_1/2$.  Moreover, the solution is isometric to the spatially homogeneous FLRW solution \eqref{eq:FLRW1}--\eqref{eq:FLRW2} in the region $\{ t_0 \leq t(u,v) < T \} \cap \{u \geq U_1\}$.
\end{theorem}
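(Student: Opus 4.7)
The plan is to close the bootstrap in three stages that mirror the structure of Section~\ref{subsec:mvfixedbackground}. First I would establish geodesic bounds on the perturbed background that are analogues of Proposition~\ref{prop:geodesics}; second, use these to transport $f$ and improve the matter estimates \eqref{eq:ba1} while reducing the $u$-support of $f - f_\circ$ from $[U_0, U_1]$ to $[U_0, U_1/2]$; finally, integrate the renormalised Einstein equations \eqref{eq:renom1}--\eqref{eq:renom4} from the characteristic data \eqref{eq:bcs} on $\{u = U_0\}$ and \eqref{eq:v0} on $\{v = v_0\}$ to upgrade the geometric bounds \eqref{eq:ba2}--\eqref{eq:ba3}.

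For the geodesic analysis, conservation of angular momentum \eqref{eq:consofam} gives $L \equiv L_0$ along each geodesic in $\supp(f)$. Using the bootstrap assumptions \eqref{eq:ba2}--\eqref{eq:ba3}, every bad error term in the renormalised equation \eqref{eq:geodesicpvrenom} sits in a coefficient that is small, so Gr\"onwall shows $tp^v$ stays within a uniform multiplicative factor of its entry value $p^v(s_0)$ at $\{v = v_0\}$. A sign argument on the analogue of \eqref{eq:FLRWpr} forces $p^v \geq p^u$ after a short parameter time, which together with the mass-shell relation \eqref{eq:massshell} yields $p^u \lesssim L^2/(t^3 p^v) \lesssim p^v(s_0)/(t\, r^2)$. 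Integrating $\dot u = p^u$ exactly as in the proof of \eqref{eq:FLRWubound} bounds the total $u$-displacement along any geodesic in $\supp(f)$ after entering $\{v \geq v_0\}$ by $C/v_0$, which is $\leq U_1/4$ provided $v_0$ (hence $U_1$) is taken sufficiently large. This delivers the support reduction to $\{U_0 \leq u \leq U_1/2\}$.

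Since $f$ is conserved along the perturbed geodesic flow, $\|f - f_\circ\|_{L^\infty} \lesssim \varepsilon_0$ in $\mathcal{Q}_{U_1,T}$. Inserting this together with the geodesic bounds and $R \sim t$, $\Omega^2 \sim t$ from \eqref{eq:ba2}--\eqref{eq:ba3} into \eqref{eq:emtensordoublenull1}--\eqref{eq:emtensordoublenull3}, and splitting each integrand into an $(f - f_\circ)$-contribution (of size $\varepsilon_0$) and an $f_\circ$-contribution (differing from the exact FLRW expression only through metric factors of size $\varepsilon$), produces the improved rates $\varepsilon_0 t^{-6}$, $\varepsilon_0 t^{-8}$, $\varepsilon_0 t^{-10}$ in \eqref{eq:ba1}. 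Integration of \eqref{eq:renom3} in $u$ from $\{u = U_0\}$ and \eqref{eq:renom4} in $v$ from $\{v = v_0\}$, using the $t$-integrability of these sources and the vanishing or $O(\varepsilon_0)$ data, improves the $\nu - \nu_\circ$ and $\lambda - \lambda_\circ$ bounds; substituting into \eqref{eq:renom1}, \eqref{eq:renom2} and integrating in both characteristic directions closes \eqref{eq:ba2}--\eqref{eq:ba3} with final constant $C\varepsilon_0$. For the Birkhoff-type conclusion in $\{u \geq U_1\}$, the support reduction gives $f \equiv f_\circ$ there; since this region contains a portion of the regular centre $\Gamma = \{u = v\}$, a Birkhoff-type argument for \eqref{eq:Einsteinss1}--\eqref{eq:Vlasovss} with isotropic matter and regular centre shows the solution is locally FLRW, and matching with the FLRW gauge fixed at $\{u = U_0\}$ via \eqref{eq:bcs} identifies the parameters with those of $g_\circ$.

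The main obstacle is the first step: obtaining uniform-in-$t$ control of $\int p^u\, ds$ along the geodesics, without which the $u$-support of $f - f_\circ$ could spread and the bootstrap would not close. The gain of the factor $L^2/(t^3 p^v) \lesssim 1/(t r^2)$ in $p^u$, which in the fixed background drives \eqref{eq:FLRWubound}, must be re-established under the $O(\varepsilon)$ metric perturbations, and it is precisely the renormalisation \eqref{eq:geodesicpvrenom}, together with the analogue of \eqref{eq:FLRWpr2}, that makes this possible while avoiding loss in any of the coupled bounds.
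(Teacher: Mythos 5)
Your overall architecture coincides with the paper's: (i) geodesic bounds via conservation of angular momentum and a Gr\"onwall/continuity argument on the renormalised equation \eqref{eq:geodesicpvrenom}, yielding the support reduction by integrating $\dot u = p^u$; (ii) matter estimates; (iii) a coupled Gr\"onwall for the renormalised Einstein equations \eqref{eq:renom1}--\eqref{eq:renom4} from the data \eqref{eq:bcs} and \eqref{eq:v0}; (iv) a Birkhoff-type argument in $\{u\geq U_1\}$. Steps (i) and (iii) are essentially as in the paper. The gap is in step (ii). You split each component into an $(f-f_\circ)$-part and an $f_\circ$-part and accept that the latter ``differs from the exact FLRW expression only through metric factors of size $\varepsilon$''. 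But the quantities that must be improved in \eqref{eq:ba1} are the weighted combinations $\Omega^{-4}R^{-2}T_{uu}$, $\Omega^{-2}R^{-4}T_{uv}$, $R^{-6}T_{vv}$, and a relative metric error of size $\varepsilon$ acting on the $f_\circ$-part contributes an error of order $\varepsilon\, t^{-5}$ to the first of these (since $\Omega_\circ^{-4}R_\circ^{-2}T^\circ_{uu}\sim t^{-5}$), which is neither $O(\varepsilon_0)$ nor $O(t^{-6})$: the bootstrap does not close. The whole point of the specific weights in \eqref{eq:ba1} --- and of the identification of $f_\circ$ with a function on the perturbed mass shell via $R^{-4}L^2=R_\circ^{-4}L_\circ^2$ and $\Omega^2R^{-2}p^u=\Omega_\circ^2R_\circ^{-2}\mathring{p}^u$ --- is that the $f_\circ$-contributions cancel \emph{identically}, so the differences in \eqref{eq:ba1} reduce to integrals of $\vert f-f_\circ\vert\leq\varepsilon_0\mathbbm{1}_{\{U_0\leq u\leq U_1/2\}}$ over the momentum region \eqref{eq:insuppfminusflrw}, giving precisely $\varepsilon_0 t^{-6}$, $\varepsilon_0 t^{-8}$, $\varepsilon_0 t^{-10}$; the metric-difference terms appear only when one un-renormalises to obtain $\vert T_{uv}-\tfrac{1}{2t}\vert$ and its analogues for use in step (iii).

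A secondary issue is step (iv). The matching must be performed on $\{u=U_1\}$, not $\{u=U_0\}$: the data determining the solution in $\{u\geq U_1\}$ live on the bounding null cone $\{u=U_1\}$ together with the centre. The paper exploits the residual gauge freedom of Remark \ref{rmk:residualgauge} to renormalise $(\widetilde u,\widetilde v)$ so that $\lambda=\lambda_\circ$ along $\{u=U_1\}$, integrates the Raychaudhuri equation \eqref{eq:Einsteinss4} restricted to that cone (using $R^{-6}T_{vv}=R_\circ^{-6}T^\circ_{vv}$ there, which follows from $f=f_\circ$) to conclude $R=R_\circ$ and $\Omega^2=\Omega_\circ^2$ on $\{u=U_1\}$, and then invokes uniqueness for the characteristic initial value problem. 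An appeal to a Birkhoff-type statement with regular centre gives local isometry to \emph{some} member of the family, but without the identification of the induced data on $\{u=U_1\}$ it does not pin the solution to the unperturbed $(\Omega_\circ^2,R_\circ)$ in the stated gauge.
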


\begin{proof}
	The proof is divided into several steps.  First, the size of the support of $f - f_{\circ}$ is estimated.  These estimates are then used to obtain estimates on the components of the energy momentum tensor.  The metric quantities are then estimated in the region $\{U_0 \leq u \leq U_1\}$.  Finally, the solution is shown to be isometric to the spatially homogeneous FLRW solution \eqref{eq:FLRW1}--\eqref{eq:FLRW2} in the region $\{u \geq U_1\}$.
	
	Throughout the proof the notation $A\lesssim B$ is used when there exists a constant $K$, which may depend on $U_1-U_0$, such that $A \leq K B$.  Constants $0< c \ll 1 \ll C$ are always independent of $U_1-U_0$.
	
	\noindent \textbf{Estimates for the support of $f - f_{\circ}$:}
	Let $\delta_0$, $L_0$ and $v_0$ be as in Section \ref{section:Cauchystab}, and recall that, for each null geodesic $\gamma\colon [1,S) \to \mathcal{M}$ emanating from $\supp(f_1)$, if $S$ is sufficiently large there exists a time $s_0 \in [1,S)$ such that $(\gamma(s_0),\dot{\gamma}(s_0)) \in \mathcal{U}_{\delta_0,L_0,v_0}$.
	
	For such a null geodesic, write
	\[
		\dot{\gamma}(s) = p^u(s) \partial_u + p^v(s) \partial_v + p^A (s) \partial_{\theta^A}.
	\]
	By the bootstrap assumption \eqref{eq:supportproperty}, the $u$ coordinate of $\gamma$ satisfies
	\[
		U_0 \leq u(s) \leq U_1,
	\]
	for all $s_0 \leq s < S$.  Recall moreover that $v_0>U_1$ and so $R(v_0,U_1)>0$.  It follows in particular that
	\begin{equation} \label{eq:alonggamma}
		t^{\frac{1}{2}} \sim r \sim R^{\frac{1}{2}} \sim v,
		\qquad
		\text{along } \gamma.
	\end{equation}
	There exist constants $0< c \ll 1 \ll C$ such that the components of the tangent vector $\dot{\gamma}$ of any such null geodesic $\gamma\colon [1,S) \to \mathcal{M}$ emanating from $\supp(f_1)$ satisfy
	\begin{equation} \label{eq:components}
		\left( \gamma_{AB} p^A(s) p^B(s) \right)^{\frac{1}{2}} \leq \frac{2 L_0}{t(s) r(s)^2},
		\qquad
		0 \leq p^u(s) \leq \frac{Cp^v(s_0)}{t(s) r(s)^2},
		\qquad
		c \frac{p^v(s_0)}{t(s)} \leq p^v(s) \leq C \frac{p^v(s_0)}{t(s)},
	\end{equation}
	for all $s\geq s_0$.  Indeed, let $\gamma\colon [s_0,S) \to \mathcal{M}$ be a null geodesic such that $(\gamma(s_0),\dot{\gamma}(s_0)) \in \mathcal{U}_{\delta_0,L_0,v_0}$.  The first of \eqref{eq:components} follows from the conservation of angular momentum \eqref{eq:consofam}, which implies that
	\begin{equation} \label{eq:consam}
		R(s)^4 \gamma_{AB} p^A(s) p^B(s)
		=
		R(s_0)^4 \gamma_{AB} p^A(s_0) p^B(s_0)
		\leq
		(L_0)^2,
	\end{equation}
	along with the fact that $t^2r^4 \leq 2R^4$, which follows from the bootstrap assumption \eqref{eq:ba3}.
	
	It follows from the geodesic equation \eqref{eq:geodesicpvrenom} that
	\begin{equation} \label{eq:tpvrenom}
		\frac{d}{ds} \Big( tp^v
		\exp \Big(
		\int_{s_0}^s h(s') p^v(s') ds'
		\Big)
		\Big)
		=
		\exp \Big(
		\int_{s_0}^s h(s') p^v(s') ds'
		\Big)
		\frac{2t^{\frac{3}{2}} R}{\Omega^2} \gamma_{AB} p^A p^B,
	\end{equation}
	where
	\[
		h(s)
		=
		(t^{-\frac{1}{2}} - \partial_v \log \Omega^2)
		+
		\big( (R - R_{\circ}) - 2t^{\frac{1}{2}}(\nu - \nu_{\circ}) \big) \frac{t^{-\frac{1}{2}} R}{\Omega^2} \frac{\gamma_{AB} p^A p^B}{(p^v)^2}.
	\]
	In order to obtain the lower bound of the latter of \eqref{eq:components}, consider the set
	\[
		\mathcal{E} = \left\{ s \in [s_0,S) \mid (tp^v)(s') \geq \frac{1}{2} (tp^v)(s_0) \text{ for all } s_0 \leq s' \leq s \right\} \subset [s_0,S).
	\]
	The set $\mathcal{E}$ is manifestly a closed, connected, non-empty subset of $[s_0,S)$.  Moreover, if $s\in \mathcal{E}$, then
	\begin{equation} \label{eq:inth}
		\int_{s_0}^s \vert h(s') \vert p^v(s') ds'
		=
		\int_{v_0}^{v(s)} \vert h \vert dv
		\leq
		C
		\int_{v_0}^{v(s)}
		\frac{\varepsilon}{v^2}
		dv
		\leq
		C \varepsilon,
	\end{equation}
	by the fact that
	\begin{equation} \label{eq:ROpupv}
		R(s)^4 \gamma_{AB} p^A(s) p^B(s)
		=
		R(s_0)^4 \gamma_{AB} p^A(s_0) p^B(s_0)
		=
		R(s_0)^2 \Omega(s_0)^2 p^u(s_0) p^v(s_0),
	\end{equation}
	along with the bootstrap assumptions \eqref{eq:ba2}, \eqref{eq:ba3} and the equivalence \eqref{eq:alonggamma}.  Since
	\[
		\frac{d}{ds} \Big( tp^v
		\exp \Big(
		\int_{s_0}^s h(s') p^v(s') ds'
		\Big)
		\Big)
		\geq
		0,
	\]
	it follows that
	\[
		(t p^v)(s_0) \leq e^{C \varepsilon} (t p^v)(s),
	\]
	and thus, if $\varepsilon$ is sufficiently small, the set $\mathcal{E} \subset [s_0,S)$ is open, and hence equal to $[s_0,S)$.
	
	For the upper bound of the latter of \eqref{eq:components}, recall equation \eqref{eq:tpvrenom} and note that
	\[
		\int_{s_0}^s \frac{2t^{\frac{3}{2}} R}{\Omega^2} \gamma_{AB} p^A p^B ds'
		=
		\int_{v_0}^{v(s)}
		\frac{2t^{\frac{5}{2}}}{R^3\Omega^2}
		\frac{R^4\gamma_{AB} p^A p^B}{tp^v}
		dv
		\leq
		C\int_{v_0}^{v(s)}
		\frac{p^v(s_0)}{v^3}
		dv
		\leq
		C p^v(s_0),
	\]
	by \eqref{eq:alonggamma}, \eqref{eq:ROpupv} and the lower bound for $(tp^v)(s)$.  The upper bound for $(tp^v)(s)$ then follows from the estimate \eqref{eq:inth}, after integrating equation \eqref{eq:tpvrenom}.
	
	The estimate for $p^u(s)$ in \eqref{eq:components} finally follows from the mass shell relation \eqref{eq:massshell}, which gives
	\[
		\frac{\Omega(s)^2}{t(s)} p^u(s)
		=
		\frac{\big(R^4 \gamma_{AB} p^A p^B\big)(s)}{t(s) p^v(s)},
	\]
	along with the former and latter of \eqref{eq:components}.
	
	It in particular follows from \eqref{eq:components} that
	\begin{equation} \label{eq:insuppfminusflrw}
		\left( \gamma_{AB} p^A p^B \right)^{\frac{1}{2}} \leq \frac{C L_0}{t^2},
		\qquad
		0 \leq p^u \leq \frac{Cp^v}{t},
		\qquad
		0 \leq p^v \leq \frac{C}{t},
		\qquad
		\text{in } \supp(f - f_{\circ}).
	\end{equation}
	
	Consider now the support property \eqref{eq:supportproperty}.  For any null geodesic $\gamma\colon [s_0,S) \to \mathcal{M}$ as above, it follows from \eqref{eq:components} that,
	\[
		u(s) - u(s_0)
		=
		\int_{s_0}^s p^u(s')ds'
		\leq
		\int_{v(s_0)}^{v(s)}
		\frac{C}{t v^2} \frac{p^v(s_0)}{p^v} dv
		\leq
		\frac{C}{t(s_0) v_0}.
	\]
	Thus, since $t(s_0),v_0 \geq 1$,
	\begin{equation} \label{eq:improvedsupport}
		\supp(\pi(f- f_{\circ})) \subset \{U_0 \leq u \leq C\},
	\end{equation}
	and so \eqref{eq:supportproperty} in fact holds with $U_1$ replaced by $U_1/2$, provided that $U_1 \geq 2C$.

	\noindent \textbf{Estimates for the energy momentum tensor components:}
	Recall the expressions \eqref{eq:emtensordoublenull1}--\eqref{eq:emtensordoublenull3} for the components of the energy momentum tensor, along with the expression \eqref{eq:emFLRWdoublenull} for the components of the energy momentum tensor of $f_{\circ}$.  By \eqref{eq:improvedsupport}, for any $(u,v,p^v,L)$,
	\begin{equation} \label{eq:fsupport}
		\vert f(u,v,p^v,L) - f_{\circ}(u,v,p^v,L) \vert \leq \varepsilon_0 \mathbbm{1}_{\{U_0 \leq u \leq U_1/2\}}.
	\end{equation}
	Define
	\[
		L_{\circ}^2 = R_{\circ}^4 \gamma_{AB} p^A p^B,
	\]
	and note that $R^{-4}L^2 = R_{\circ}^{-4} L_{\circ}^2$.  It follows from the support property \eqref{eq:insuppfminusflrw} that
	\begin{align*}
		\Big\vert \Omega^{-4} R^{-2} T_{uu} - \Omega^{-4}_{\circ} R^{-2}_{\circ} T^{\circ}_{uu} \Big\vert
		\leq
		\frac{\pi}{2}
		\int_0^{L_0} \int_0^{Ct^{-1}}
		\vert
		f
		-
		f_{\circ}
		\vert
		\,
		p^{v}
		dp^v R^{-4} L \, dL
		\lesssim
		\frac{\varepsilon_0}{t^6} \mathbbm{1}_{\{U_0 \leq u \leq U_1/2\}}.
	\end{align*}
	Noting that $\Omega^2 R^{-2} p^u = \Omega^2_{\circ} R^{-2}_{\circ} \mathring{p}^u$, it similarly follows that
	\begin{align*}
		\Big\vert \Omega^{-2} R^{-4} T_{uv} - \Omega^{-2}_{\circ} R^{-4}_{\circ} T^{\circ}_{uv} \Big\vert
		\leq
		\frac{\pi}{2}
		\int_0^{L_0} \int_0^{Ct^{-1}}
		\vert
		f
		-
		f_{\circ}
		\vert
		\,
		\frac{\Omega^2 p^u}{R^2}
		dp^v R^{-4} L \, dL
		\lesssim
		\frac{\varepsilon_0}{t^8} \mathbbm{1}_{\{U_0 \leq u \leq U_1/2\}},
	\end{align*}
	and
	\begin{align*}
		\Big\vert R^{-6} T_{vv} - R^{-6}_{\circ} T^{\circ}_{vv} \Big\vert
		&
		\leq
		\frac{\pi}{2}
		\int_0^{L_0} \int_0^{Ct^{-1}}
		\vert
		f
		-
		f_{\circ}
		\vert
		\,
		\frac{\Omega^4(p^u)^2}{R^4 p^{v}}
		dp^v R^{-4} L \, dL
		\lesssim
		\frac{\varepsilon_0}{t^{10}} \mathbbm{1}_{\{U_0 \leq u \leq U_1/2\}}.
	\end{align*}
	Thus
	\begin{align}
		\Big\vert T_{uu} - \frac{1}{t} \Big\vert
		&
		\lesssim
		\frac{\varepsilon_0}{t^2} \mathbbm{1}_{\{U_0 \leq u \leq U_1/2\}}
		+
		t^{-2} \vert R - R_{\circ}\vert
		+
		t^{-2} \vert \Omega^2 - \Omega_{\circ}^2 \vert
		,
		\label{eq:emimprov1}
		\\
		\Big\vert T_{uv} - \frac{1}{2t} \Big\vert
		&
		\lesssim
		\frac{\varepsilon_0}{t^3} \mathbbm{1}_{\{U_0 \leq u \leq U_1/2\}}
		+
		t^{-2} \vert R - R_{\circ}\vert
		+
		t^{-2} \vert \Omega^2 - \Omega_{\circ}^2 \vert
		,
		\label{eq:emimprov2}
		\\
		\Big\vert T_{vv} - \frac{1}{t} \Big\vert
		&
		\lesssim
		\frac{\varepsilon_0}{t^4} \mathbbm{1}_{\{U_0 \leq u \leq U_1/2\}}
		+
		t^{-2} \vert R - R_{\circ}\vert
		.
		\label{eq:emimprov3}
	\end{align}

	\noindent \textbf{Estimates for the metric quantities in the region $u \leq U_1$:}
	For the metric quantities, consider the region $U_0 \leq u \leq U_1$ and recall that
	\[
		t^{\frac{1}{2}} \sim r \sim R^{\frac{1}{2}} \sim v, \qquad \text{for } U_0 \leq u \leq U_1, \quad v \geq v_0.
	\]
	Let $u$ and $v$ be such that $U_0 \leq u \leq U_1$, $v \geq v_0$ and $t_0 \leq t(u,v) < T$.
	
	Note that
	\[
		\frac{1}{2} \partial_v (R^2 - R_{\circ}^2) = R_{\circ} (\lambda - \lambda_{\circ} \big) + (R - R_{\circ}) \lambda.
	\]
	Equation \eqref{eq:renom1} and the estimate \eqref{eq:emimprov2} then imply that
	\[
		\big\vert \partial_u \big( R_{\circ} (\lambda - \lambda_{\circ} \big) + (R - R_{\circ}) \lambda \big\vert
		\lesssim
		\vert \Omega^2 - \Omega_{\circ}^2 \vert
		+
		\vert R - R_{\circ} \vert
		+
		\varepsilon_0 v^{-2}.
	\]
	Thus, integrating and using the boundary condition \eqref{eq:bcs},
	\begin{equation} \label{eq:regioni1}
		\vert \lambda - \lambda_{\circ} \vert(u,v)
		\lesssim
		v^{-1} \vert R - R_{\circ} \vert(u,v)
		+
		v^{-2} \int_{U_0}^u
		\vert \Omega^2 - \Omega_{\circ}^2 \vert(u',v)
		+
		\vert R - R_{\circ} \vert (u',v)
		du'
		+
		\varepsilon_0 v^{-4}.
	\end{equation}
	Similarly, since
	\[
		\nu - \nu_{\circ}
		=
		\Omega^2
		\big( \Omega^{-2} \nu - \Omega^{-2}_{\circ} \nu_{\circ} \big)
		-
		\nu_{\circ} (1 - \Omega^{2} \Omega^{-2}_{\circ})
		,
	\]
	it follows from equation \eqref{eq:renom3} and the estimate \eqref{eq:emimprov2} and the boundary condition \eqref{eq:bcs} that
	\begin{equation} \label{eq:regioni2}
		\vert \nu - \nu_{\circ}\vert(u,v)
		\lesssim
		v^{-2} \vert \Omega^2 - \Omega_{\circ}^2 \vert (u,v)
		+
		v^{-2} \int_{U_0}^u
		\vert \Omega^2 - \Omega_{\circ}^2 \vert(u',v)
		+
		\vert R - R_{\circ} \vert (u',v)
		du'
		+
		\varepsilon_0 v^{-2}.
	\end{equation}
	Moreover,
	\[
		\partial_u (R - R_{\circ}) = \nu - \nu_{\circ},
	\]
	and so \eqref{eq:regioni2} and the Gr\"{o}nwall inequality imply that
	\begin{equation} \label{eq:regioni3}
		\vert R - R_{\circ}\vert(u,v)
		\lesssim
		v^{-2} \int_{U_0}^u
		\vert \Omega^2 - \Omega_{\circ}^2 \vert(u',v)
		du'
		+
		\varepsilon_0 v^{-2}.
	\end{equation}
	Returning to \eqref{eq:regioni1} and \eqref{eq:regioni2}, one then has
	\begin{equation} \label{eq:regioni4}
		\vert \lambda - \lambda_{\circ} \vert(u,v)
		\lesssim
		v^{-2} \int_{U_0}^u
		\vert \Omega^2 - \Omega_{\circ}^2 \vert(u',v)
		du'
		+
		\varepsilon_0 v^{-3},
	\end{equation}
	and
	\begin{equation} \label{eq:regioni5}
		\vert \nu - \nu_{\circ}\vert(u,v)
		\lesssim
		v^{-2} \vert \Omega^2 - \Omega_{\circ}^2 \vert (u,v)
		+
		v^{-2} \int_{U_0}^u
		\vert \Omega^2 - \Omega_{\circ}^2 \vert(u',v)
		du'
		+
		\varepsilon_0 v^{-2}
		.
	\end{equation}
	Now, by equation \eqref{eq:renom2} and the estimate \eqref{eq:emimprov2}
	\begin{equation} \label{eq:regioni6}
		\Big\vert \partial_u \partial_v \log \frac{\Omega^2}{\Omega^2_{\circ}} \Big\vert
		\lesssim
		v^{-4} \vert \Omega^2 - \Omega_{\circ}^2 \vert
		+
		v^{-4} \vert R - R_{\circ} \vert
		+
		v^{-3} \vert \nu - \nu_{\circ} \vert
		+
		v^{-4} \vert \lambda - \lambda_{\circ} \vert
		+
		\varepsilon_0 v^{-4},
	\end{equation}
	Integrating in $u$ and $v$, using the boundary condition \eqref{eq:bcs} and \eqref{eq:v0}, and inserting \eqref{eq:regioni3}--\eqref{eq:regioni5}, it follows that
	\begin{equation} \label{eq:regioni7}
		\Big\vert \log \frac{\Omega^2}{\Omega^2_{\circ}} \Big\vert
		\lesssim
		\varepsilon_0
		+
		\int_{v_0}^v \int_{U_0}^u
		(v')^{-4} \vert \Omega^2 - \Omega_{\circ}^2 \vert(v',u')
		du'dv'
	\end{equation}
	Now since, for $\varepsilon$ suitably small,
	\[
		\vert \Omega^2 - \Omega_{\circ}^2 \vert
		\lesssim
		\Omega^2 \Big\vert \log \frac{\Omega^2}{\Omega^2_{\circ}} \Big\vert
		\lesssim
		v^2 \Big\vert \log \frac{\Omega^2}{\Omega^2_{\circ}} \Big\vert
		\qquad
		\text{for } U_0 \leq u \leq U_1,
	\]
	it follows from the Gr\"{o}nwall inequality, after integrating \eqref{eq:regioni7} once more in $u$, that
	\[
		\vert \Omega^2 - \Omega_{\circ}^2 \vert
		\lesssim
		\varepsilon_0 v^2.
	\]
	Returning to \eqref{eq:regioni3}--\eqref{eq:regioni5}, it then follows that
	\[
		\vert R - R_{\circ}\vert
		\lesssim
		\varepsilon_0,
		\qquad
		\vert \lambda - \lambda_{\circ} \vert
		\lesssim
		\varepsilon_0,
		\qquad
		\vert \nu - \nu_{\circ}\vert
		\lesssim
		\varepsilon_0.
	\]
	Similarly, returning to \eqref{eq:regioni6},
	\[
		\Big\vert \partial_v \log \Omega^2 - \frac{1}{t^{\frac{1}{2}}} \Big\vert \lesssim \varepsilon_0 v^{-2},
		\qquad
		\Big\vert \partial_u \log \Omega^2 - \frac{1}{t^{\frac{1}{2}}} \Big\vert \lesssim \varepsilon_0.
	\]
	
	\noindent \textbf{The region $u \geq U_1$:}
	In order to see that the solution is isometric to the spatially homogeneous FLRW solution \eqref{eq:FLRW1}--\eqref{eq:FLRW2} in the region $u \geq U_1$, recall first \eqref{eq:fsupport}, from which it follows that
	\[
		\big( R^{-6}T_{vv} \big) ( U_1,v) = \big(R_{\circ}^{-6} T^{\circ}_{vv}\big) (U_1,v) = t(U_1,v)^{-4} r(U_1,v)^{-6} \qquad \text{for all } v \geq V_1,
	\]
	where $V_1$ is such that $R(U_1,V_1) = 0$.  Recall the residual gauge freedom of Remark \ref{rmk:residualgauge} and consider the following change of gauge.  Define increasing functions $\widetilde{v}(v)$ and $\widetilde{u}(u)$ as solutions of
	\[
		\widetilde{v}(V_1) = U_1,
		\qquad
		\widetilde{v}'(v) = \lambda(U_1,v) \lambda_{\circ}^{-1}(U_1,\widetilde{v}(v)),
		\qquad
		\text{for } v \geq V_1,
	\]
	and
	\[
		\widetilde{u}(u) = u
		\quad
		\text{for } u \leq U_1,
		\qquad
		\widetilde{u}' (u) = \frac{1}{\widetilde{v}'(V_1)} \Omega^2(u,V_1) \Omega_{\circ}^{-2}(\widetilde{u}(u),V_1)
		\quad
		\text{for } u > U_1,
	\]
	respectively.  It follows that $\{u=U_1\} = \{ \widetilde{u} = U_1\}$, and moreover the relations \eqref{eq:gauge1} and \eqref{eq:gauge4} imply that the solution in the resulting gauge satisfies
	\[
		R (U_1,U_1) = 0,
		\qquad
		\Omega^2(U_1,U_1) = \Omega_{\circ}^2(U_1,U_1)
		\qquad
		\partial_vR (U_1,v) = \partial_vR_{\circ}(U_1,v),
		\qquad
		\text{for all } v \geq U_1.
	\]
	The Raychaudhuri equation \eqref{eq:Einsteinss4}, restricted to $u=U_1$, takes the form
	\[
		\partial_v(\Omega^{-2} \partial_v R)(U_1,v)
		=
		- \frac{1}{2} R^7 \Omega^{-2} \big( R_{\circ}^{-6} T^{\circ}_{vv} \big)(U_1,v).
	\]
	It then follows that
	\[
		R (U_1,v) = R_{\circ}(U_1,v),
		\qquad
		\Omega^2(U_1,v) = \Omega_{\circ}^2(U_1,v)
		\qquad
		\text{for all } v \geq U_1.
	\]
	By uniqueness for the characteristic initial value problem, it follows that $R = R_{\circ}$ and $\Omega^2 = \Omega^2_{\circ}$ for all $u \geq U_1$.
\end{proof}

\subsection{The conclusion of the proof of the main theorem}
\label{subsec:conclusion}

Recall the set $\mathcal{Q}_{U_1,T} \subset{Q}$ defined by \eqref{eq:QUT}.  Consider the set $\mathcal{A} \subset [t_0, \infty)$ of times $T \in [t_0, \infty)$ such that the solution exists up to time $T$ --- in the sense that $(u(t,r),v(t,r)) \in \mathcal{Q}$ for all $t_0 \leq t <T$, $r\geq 0$ --- and moreover satisfies the estimates \eqref{eq:ba1}--\eqref{eq:ba3} for all $(u,v) \in \mathcal{Q}_{U_1,T}$ for some $\varepsilon>0$ sufficiently small so that the conclusion of Theorem \ref{thm:bootstrap} holds.  The set $\mathcal{A} \subset [t_0,\infty)$ is manifestly connected and open, and is moreover non-empty by local existence and Cauchy stability, provided $\varepsilon_0$ is sufficiently small.  By Theorem \ref{thm:noncentext}, the bounds \eqref{eq:ba2}--\eqref{eq:ba3} imply that $(u(T,r),v(T,r)) \in \mathcal{Q}$ for all $r\geq 0$ such that $u(T,r) \leq U_1$.  By Theorem \ref{thm:bootstrap}, the solution is isometric to the spatially homogeneous FLRW solution \eqref{eq:FLRW1}--\eqref{eq:FLRW2} in the region $\{ t_0 \leq t(u,v) < T \} \cap \{u \geq U_1\}$ and thus $(u(T,r),v(T,r)) \in \mathcal{Q}$ for all $r\geq 0$.  Theorem \ref{thm:bootstrap} moreover guarantees that the bounds \eqref{eq:ba1}--\eqref{eq:ba3} cannot be saturated, and thus the set $\mathcal{A}$ is moreover closed, and hence equal to $[t_0,\infty)$.

The future geodesic completeness follows from the bounds on the Christoffel symbols arising from \eqref{eq:ba2}--\eqref{eq:ba3} (as, for example in Section 16 of \cite{LiRo}).

\bibliography{masslessVlasovFLRW}{}
\bibliographystyle{plain}

\end{document}